\def\si{-\!\!\!*~}
\def\dom{\mathrm{dom}}
\newtheorem{theorem}{Theorem}[section]
\newtheorem{proof}{Proof}[section]
\newtheorem{lemma}{Lemma}[section]
\begin{document}
%
\title{Local Reasoning about Block-based \\ Cloud Storage Systems}
%
%
%
%

\author{Zhao Jin,~Hanpin Wang,~Lei Zhang,~Bowen Zhang,~and~Yongzhi Cao,
	\IEEEcompsocitemizethanks{\IEEEcompsocthanksitem Z. Jin, H. Wang, L. Zhang, B. Zhang, Y. Cao are with the Key Laboratory of High Confidence Software Technologies (MOE), School of Electronics Engineering and Computer Science, Peking University, Beijing 100871, China.
		\protect\\
		E-mail: \{jinzhao, whpxhy, zhangleijuly, zhangbowen, caoyz\}@pku.edu.cn.
		\IEEEcompsocthanksitem H. Wang is with the School of Computer Science and Cyber Engineering, Guangzhou University, Guangzhou 510006, China.}
	}

\IEEEtitleabstractindextext{%
\begin{abstract}
Owing to the massive growth in the storage demands of big data, Cloud Storage Systems (CSSs) have been put forward to improve the storage capacity.
Compare with traditional storage systems, CSSs have lots of advantages, such as higher capacity, lower cost, and easier scalability.
However, they suffer from the main shortcoming of high complexity.
To ensure the reliability of CSSs, the correctness of management programs should be proven. 
Therefore, a verification framework based on Separation Logic (SL) is proposed to prove the correctness of management programs in Block-based Cloud Storage Systems (BCSSs), which is the most popular CSSs. 
The main contributions are as follows.
(1) Two-tier heap structure is constructed as the type of storage units in BCSSs. All the operations to BCSSs are based on the structure. 
(2) Assertion pairs are defined to describe the properties for the two-tier structure.
The fact that the two components of a pair effect each other leads lots of interesting properties.
(3) A proof system with Hoare-style specification rules is proposed to reason about the BCSSs.
The results show that the correctness of BCSSs can be verified precisely and flexibly.
\end{abstract}

\begin{IEEEkeywords}
Separation Logic, Hoare Logic, Block-based Cloud Storage Systems, Deduction, Formal Verification.
\end{IEEEkeywords}}

\maketitle

\IEEEdisplaynontitleabstractindextext

%
\IEEEpeerreviewmaketitle

\IEEEraisesectionheading{\section{Introduction}\label{sec:introduction}}

%
%
%
%
\IEEEPARstart{W}{ith} the rapid growth of data, the capacity of traditional storage devices could not meet the great demands.
The Cloud Storage Systems (CSSs) have been put forward to improve the storage capacity \cite{hashem2015rise}.
According to the data types stored, CSSs are divided into three categories: Block-based CSSs (BCSSs), Object-based CSSs, and File-based CSSs. Among these CSSs, BCSSs have the lowest cost and are the easiest to scale, hence BCSSs are the most popular systems used in CSSs at present.

In BCSSs, data are stored in the block-based storage structure, which means that the resources in BCSSs consist of small block spaces. When user submit their data file to a BCSS, the system would cut the file into some segments, and then take those segments into property blocks.
Using Hadoop distribeted file system (HDFS) \cite{shvachko2010hadoop} as an example, uploaded file is divided into lots of 128MB segments firstly. 
Then the system allocates those segments to a sequence of blocks one by one, and the size of each block is 128MB as well. 
Finally, the system will generate a file table to record the block addresses and the relation between file segments and blocks. 
The others, like Google file system (GFS) \cite{Ghemawat2003gfs}, have the similar procedure.
For example, the MapReduce programs, which is a popular tool in big data analysis, run on block-based distributed file systems, such as GFS \cite{dean2008mapreduce} and HDFS, etc. 
In the MapReduce programs, the input data is stored in a file partitioned by blocks, and each of the block is processed by a map task.
To reduce the cost of data transmission, the map task is invoked on the data servers where the blocks are stored.
These characteristics lead to the different management in BCSSs from tradition and higher complexity as well. 
As a result, the reliability of BCSSs has been constantly questioned.
For instance, the Amazon Simple Storage Service (S3) accidentally removed a set of servers due to system software errors, resulting in tremendous data loss with many customers involved, which caused great harm to the data integrity in the cloud storage system \cite{S3}.
Therefore the problem of reliability of BCSSs is coming up.

Generally, the reliability of BCSSs is reflected in the correctness and security aspects.
The correctness refers to the algorithm which can produce the expected output for each input \cite{dunlop1980comparative}. 
The security demands a series of mechanisms to protect data from theft or damage. 
Obviously, the correctness is the most basic requirement of reliability.
The correctness analysis not only proves the correctness of the data processing, but also helps to find potential system design errors. 
Therefore, we aim to verify the correctness of BCSSs.
In BCSSs, the correctness refers to the management programs correctness. 
The BCSSs management programs mean a series of commands which are used to deal with the request of user or to manage storage system, such as \textbf{create}, \textbf{append}, and \textbf{delete} commands.
There are plenty of ways to verify the correctness of programs, such as software testing techniques and formal verification.
The former is popular due to its low cost, but the disadvantage of incomplete.
Although the latter is a little more expensive, it can provide strict mathematical analysis support and is efficient in finding errors. Therefore, it is widely used in the safety-critical systems.


Memory management is challenging for formal verification, and memory errors are not easily handled during program execution.
Separation Logic (SL) \cite{reynolds2002separation}, which is a Hoare-style logic, is a well-established approach for formal verification of pointer programs.
SL is best at reasoning about computer memory, especially random access memory \cite{pym2018separation}.
Through introducing a connective $*$ called separating conjunction, SL is able to reason about the shared mutable data structures (e.g., various types of linked lists, binary search trees, and AVL trees).
Besides, it has complete assertions and specification rules support for logical reasoning.
In view of data processing operations of files and blocks in BCSSs management programs are similar to the ordinary computer memory, SL may just meet the requirements.
It seem that SL is a reasonable way to reason about BCSSs, but there are some following challenges that are specific to the complicated target system.
(1). How to model the features of the block-based storage structure? For example, a file consist of a sequence of blocks located by block address. The low-level heap-manipulating framework in SL cannot describe the complicated structure.
(2). How to construct the assertions and specifications according to the block-based storage framework? While the key operations of SL, i.e. separating conjunction $*$, permit the concise and flexible description of shared mutable data structures gracefully, they are not effective in describing the execute actions on the content of blocks.
(3). How to refine the specification rules to make them concise and effective? When considering a algorithm with while-loop, the cumbersome specification is difficult in finding loop invariants.
In order to accommodate to these requirements and restrictions, a novel framework with two-tier heap structure is constructed, several new operators is introduced to describe the properties of BCSSs, and a proof system with Hoare-style specification rules is presented to reason about the block-manipulating programs.

Formal verification is mature enough for developing the correctness of most computer programs.
For example, P. Gardner and G. Nizik \cite{ntzik2015reasoning} proposed their work about local reasoning for the POSIX file system. W.H Hesselink and M.I Lali \cite{hesselink2012formalizing} provided abstract definitions for file systems which are defined as a partial function from paths to data. In addition, some efforts to formalize CSSs have been made. Stephen et al. \cite{james2014program} used formal methods to analyze data flow and proposed an execution model for executing Pig Latin scripts in cloud systems without sacrificing confidentiality of data. I. Pereverzeva et al. \cite{pereverzeva2013formal} founded a formal solution for CSS development, which had the capability of modeling large and elastic data storage system. However, all the above works cannot model or reason about the existed BCSSs.

Before the formal verification work, firstly a formal model of the target system is created, which requires a formal modeling language. However, due to the above special features of BCSSs, it is difficult for existing formal methods to describe the execution process of BCSSs management programs accurately from a view of data details. 
In our previous work \cite{jing2017modeling,wang2018reasoning}, 
we presented a formal language to describe management programs of Massive Data Storage System  (henceforth, LMDSS).
It is an extension of {\bf WHILE}h \cite{yang2001local} programming language with new ingredients to describe file and block operations.
Not only can describe memory management operations, it also describes block-based storage mutation operations.
However, there are three flaws in LMDSS which make it difficult to describe the BCSSs programming model: First, LMDSS regards the content of a block as a big integer.
Although it is sufficient to describe macroscopic file and block operations, for BCSSs management programs the data granularity of LMDSS is not fine enough to accurately describe the process of reading and appending data record in blocks.
Second, the block variables and general variables are disconnected, which means we cannot get a value that can be used for general arithmetic operations from the block content data by some commands.
Third, the assertions and specifications in LMDSS are based on SL, which has a strong theoretical and practical significance in verifying low-level storage model.
But for BCSSs model, these are not enough.
So if we can enhance LMDSS by redefining the block content values, connecting block and general variables, and constructing a proof system, it should be capable to model the BCSSs management programs.

SL \cite{reynolds2002separation} is a mainstream formalization to verify the correctness of traditional storage systems, which has a strong theoretical and practical significance.
By using SL, N.T Smith et al. \cite{birkedal2004local} verified the correctness of Cheney's copying garbage collector programs in memory management system. R. Jung et al. \cite{jung2015iris} extended SL and created the Iris Logic, which supports the verification of concurrent programs. J. Berdine et al. \cite{berdine2005symbolic} proposed the semantics of symbolic heaps, which is a variant of SL. 
In recent years, most of the assertion languages based on SL have adopted the symbolic heaps model. 
Recently, several studies \cite{brotherston2017biabduction,le2017decidable,ta2016automated,brotherston2016model} focused on the logical properties of SL. 
For example, 
Q.T Ta et al. \cite{ta2016automated} presented a sequent-based deductive system for automatically proving entailments by the symbolic heap fragment with arbitrary user-defined inductive heap predicates.
In addition, some verification systems have been implemented in SL  \cite{krogh2017relational,dodds2009deny,svendsen2014impredicative,yang2002semantic,lee2014proof}.

The prior approach of SL to construct verification systems relies on the principle called ``local reasoning'', which was proposed by O'Hearn \cite{o2001local} as a solution for potential pointer aliasing problem in program verification.
The key idea of local reasoning is that even though each phrase of a program can access all stores variables and heaps in principle, it usually only uses a few of them. 
Therefore, a specification and proof can concentrate on only portions of the heaps that a program accesses.
More specifically, two points are worth emphasizing.
Point 1 is that every valid specification $ \{ p \} C \{ q \} $ is ``tight'' in the sense that every cell, which is actually used by $ C $, must be either guaranteed to exist by $ p $ or allocated by $ C $.
Point 2 is a proof method called Frame Rule to formalize local reasoning.
Such rule which is an inference rule, allows one to derive a specification from a given one without ever referring to the actual implementation of a program.
However, SL cannot reason about BCSSs since it is based on a low-level storage model.
Our novel model works directly on the characteristics of BCSSs, which can address all the aforementioned problems.

In this paper, we propose a verification framework to verify the correctness of BCSSs management programs. The main contributions are as follows.

\begin{enumerate}[labelindent=\parindent,leftmargin=*]
	\item A novel framework with two-tier heap structure is constructed to reflect the characteristics of BCSSs, and a modeling language is defined based on the framework. Especially, we introduce the file and block expressions to describe the file-block relationship and refined block content.
	\item Assertions based on SL are constructed to describe the properties of BCSSs.
	Several new operators are introduced to describe the block-heap manipulating operations, e.g., $b_1 \looparrowright b_2$ asserts that the content of $b_1$ is corresponding to $b_2$, which enables the block content to be obtained directly and leaves the address sequence of the block as an implied condition.
	Meanwhile, we introduce quantifiers over block and file variables, which makes the assertions more expressive.
	\item A proof system with the Hoare-style specification rules is proposed to reason about the BCSSs.
	The specification rules, which support local reasoning, are able to describe the behavior of block-manipulating commands concisely, and some special situations are addressed that suffice for formal proofs. An example of verifying a practical algorithm with while-loop is given to demonstrate the feasibility of our method.
\end{enumerate}

The paper is organized as follows. Section \ref{sect2} contains background. Section \ref{sect3} presents a modeling language for BCSSs, and an assertion language based on SL is constructed to describe the properties of BCSSs in Section \ref{sect4}. Section \ref{sect5} proposes the Hoare-style proof system to reason about the BCSSs. Section \ref{sect6} prove that the Frame Rule for BCSSs to reflect local reasoning  is soundness. Section \ref{sect7} gives an illustrating example, and Section \ref{sect8} draws conclusions.

\section{Background} \label{sect2}

In HDFS, the DataNode stores the data blocks into local filesystem directories. However, when writing new blocks to it, there is no guarantee that HDFS will automatically distribute data evenly among the DataNodes in a cluster, so it can easily become imbalanced. That may lead to the frequent use of network bandwidth and low storage efficiency \cite{H1}.

For redistributing data blocks when an imbalance occurs, HDFS provides a command line tool called Disk Balancer.
It is mainly implemented by an algorithm called Transfer, which can let administrators rebalance data across multiple disks by moving blocks from one disk to another [24].
Apparently, moving blocks is the key operation of Disk Balancer, which should not lead to any memory errors such as block losing or content changed.
Hence, formal reasoning is of fundamental importance for this operation.

SL is often used in reasoning about heap-manipulating programs.
However, SL cannot reason about BCSSs since it is difficult to describe the execution process of the management programs accurately from a view of block operation details. 
Taking the Disk Balancer as an example, it involves lots of execute actions on the content of blocks, which is hard for SL to construct the specifications. Besides, SL cannot check whether the file is consistent before or after the execution.   
Therefore, it is necessary to construct a verification framework of BCSSs to prove the correctness problems caused by the complexity of the block-based storage structure according to the characteristics of BCSSs.

The main concepts of the proposed framework is illustrated in the rest of the section.
An initial, high level, and incomplete abstract model of the framework is shown in Fig.\ref{img1}.
To describe the BCSSs management programs, the SL heap is extended by introducing a two-tier heap structure, which is consist of  $\textrm{Heaps}_B$ and $\textrm{Heaps}_V$, while the ordinary store is also extended to reflect the file-block relationship, such as $\textrm{Stores}_F$. The whole computational states of the framework is described later in Sect.\ref{sect3}. In our model, the files are stored as follows. A file variable $f$ is mapped to a sequence of block addresses $bloc_1,...,bloc_k$ by $\textrm{Stores}_F$. $\textrm{Heaps}_B$ maps each of these block addresses to a sequence of location addresses. $\textrm{Heaps}_V$ maps each of these location addresses to a value. In particular, not every location address must belong to a certain block, the rest addresses in $\textrm{Heaps}_V$ are likewise mapped to values. 
Notice that this novel model is not simply a double combination of SL since there is a joint implication between the two tiers of the heap structure, i.e., some location address in $\textrm{Heaps}_V$ belongs to a certain block, which SL cannot express.

\begin{figure}[h]
	\centering
	\includegraphics[scale=0.6]{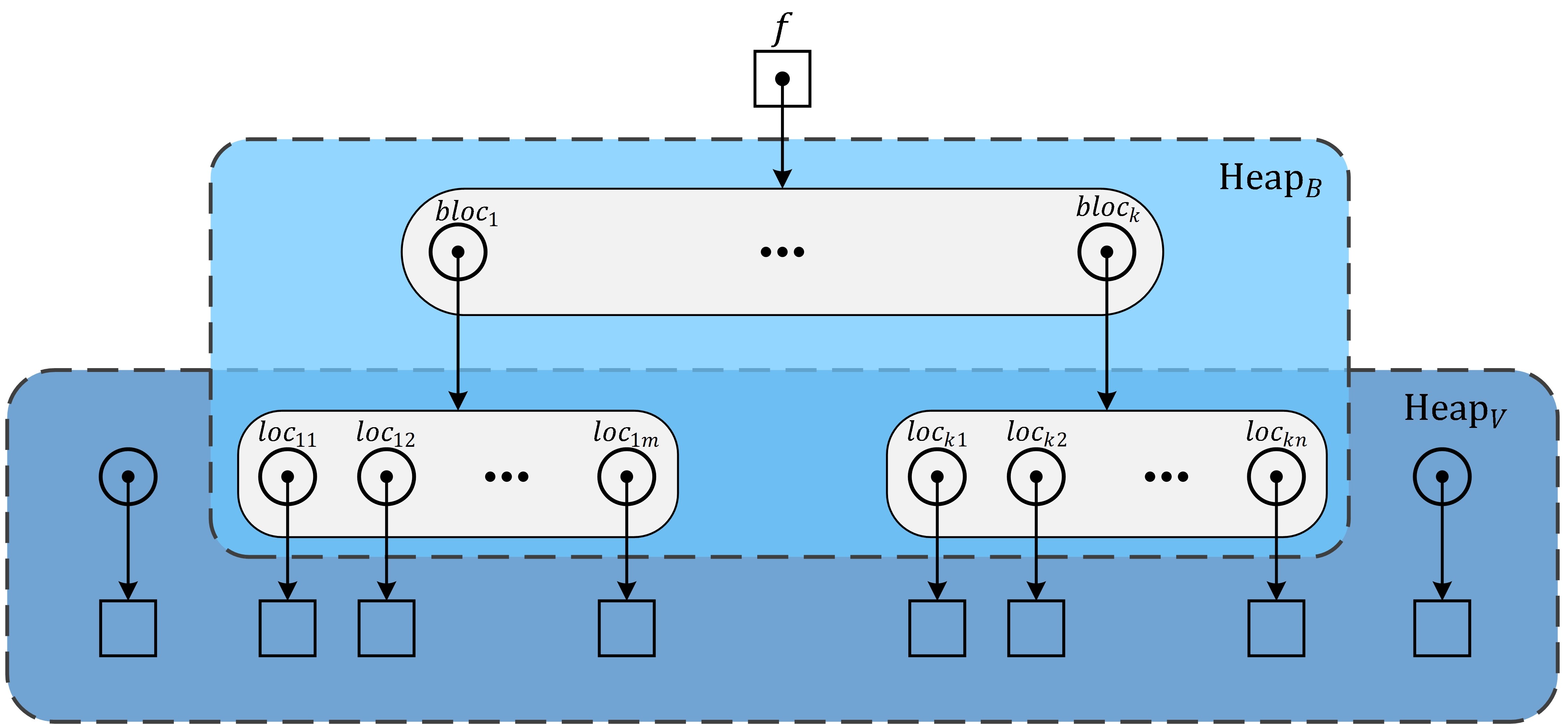}
	\caption{Abstract Model of the Verification Framework}
	\label{img1}
\end{figure}

\section{Modeling Language for BCSSs} \label{sect3}

In this section, the definition of the block content is adjusted based on LMDSS in the introduction in order to have a better description of data processing in BCSSs management programs.
Furthermore, we connect new block variables with general variables to reflect the characteristics of the block-based storage structure.

\subsection{Syntax}
In our language there are four kinds of expressions: (arithmetic) location expressions, file expressions, block expressions, and Boolean expressions. 
The full syntax for expressions and commands in the language is as follows.

\begin{eqnarray*}
	e & ::=  & n ~|~ x,y,... ~|~ e_1 + e_2 ~|~ e_1 - e_2 ~|~ e_1 \times e_2 ~|~ \#f ~|~  \#bk
	\\
	fe & ::= & \mathbf{nil} ~|~ f ~|~ fe \cdot bk ~|~ fe_1 \cdot fe_2
	\\
	bk & ::= & n ~|~ b ~|~ f.e
	\\
	be & ::= & e_1 = e_2 ~|~ e_1 \leq e_2 ~|~ bk_1 == bk_2 ~|~ \mathbf{true} ~|~ \mathbf{false} ~|~ \neg be ~|~ 
	\\
	& & be_1 \land be_2 ~|~ be_1 \lor be_2
	\\
	C & ::= & x:=e ~|~ x:=\mathbf{cons}(\bar{e}) ~|~ x:=[e] ~|~ [e]:=e' ~|~ 
	\\
	& & \mathbf{dispose}(e) ~|~ f:=\mathbf{create}(bk^*) ~|~ \mathbf{attach}(f,bk^*) ~|~ 
	\\
	& &  \mathbf{delete}~f ~|~ b:=\mathbf{allocate}(\bar{e}) ~|~  \mathbf{append}(bk,e) ~|~ 
	\\
	& &  x:=\{bk.e\} ~|~ b:=bk ~|~  \mathbf{delete}~bk ~|~ f.e:=bk  ~|~ 
	\\
	& & C;C' ~|~ \mathbf{if}~be~\mathbf{then}~C~\mathbf{else}~C' ~|~ \mathbf{while}~be~\mathbf{do}~C'
\end{eqnarray*}

where $e$ is written for location expressions, $fe$ for file expressions, $bk$ for block expressions, $be$ for Boolean expressions, and $C$ for commands.

Intuitively, $\#f$ means the block numbers the file $f$ occupies, and $f.e$ points out the address that the $i$-th block of the file $f$ corresponds to, where $i$ is the value of the location expression $e$.
For brevity, we abbreviate the sequence $e_1,...,e_n$ of the location expressions by $\bar{e}$, and the sequence $bk_1,...,bk_2$ of the block expressions by $bk^*$.

Besides the commands in {\bf WHILE}h, which contains all commands of IMP \cite{winskel1993formal}, we introduce some new commands to describe the special operations about files and blocks in the BCSSs management programs.

File commands contain four core operations, which seem to be enough to describe the majority of daily-file operations in CSSs.
\begin{itemize}
	\item $f:=\mathbf{create}(bk^*)$: \hfill file creation;
	\item $\mathbf{attach}(f,bk^*)$: \hfill block address appending;
	\item $\mathbf{delete}~f$:  \hfill file deletion.
\end{itemize}

The file creation command creates a file that consists of block sequence at block addresses expressed by $bk^*$. The block address appending command can append blocks at block addresses $bk^*$ to an existing file.
The file deletion command removes a file and all blocks belong to it become orphaned.

Block commands express block operations.
They are as follows.
\begin{itemize}
	\item $b:=\mathbf{allocate}(\bar{e})$:           \hfill block allocation;
	\item $\mathbf{append}(bk,e)$:               \hfill block content append;
	\item $x:=\{bk.e\}$:         \hfill block content lookup;
	\item $b:=bk$:  \hfill block address assignment;
	\item $f.e:=bk$:  \hfill block address of a file assignment;
	\item $\mathbf{delete}~bk$:  \hfill block deletion.
\end{itemize}

The block allocation command creates a block with the initial value $\bar{e}$, which is the content data of the block. The block content append command appends data $e$ after the content.
Since the content of a block is an integer value sequence at present, the block content lookup command reads the content of the $i$-th field of the block $bk$, where $i$ is the evaluation of expression $e$. 
The block address assignment command assigns value $bk$ to block variable $b$. 
In most of the time, this command is used for the situation involving the file operations, so we add the block address of a file assignment command as an alternative to make the modeling language more powerful.
The block deletion command is formulated similarly as file deletion command with the difference in replacing the file variable by block variable.

%
%
%


\subsection{Domains}

The model has five components:  $\textrm{Stores}_V$,  $\textrm{Stores}_B$, $\textrm{Stores}_F$, $\textrm{Heaps}_V$, and  $\textrm{Heaps}_B$. 
$\textrm{Stores}_V$ is a total function mapping from location variables to addresses. 
$\textrm{Stores}_B$ denotes a total function mapping from block variables to block addresses. Strictly speaking, values, addresses, and block addresses are different in type. 
But in our language, to permit unrestricted address arithmetic, we assume that all the values, addresses, and block addresses are integers. 
$\textrm{Stores}_F$ represents a total function mapping from file variables into a sequence of block addresses. 
$\textrm{Heaps}_V$ is indexed by a subset $ \textrm{Loc} $ of the integers, and it is accessed using indirect addressing $[e]$, where $e$ is a location expression. 
$\textrm{Heaps}_B$ is indexed by a subset $ \textrm{BLoc} $ of the integers, and it is accessed using indirect addressing $\{bk\}$, where $bk$ is a block expression.

\begin{align*}
&\textrm{Values}  \triangleq \{...-1,0,1,...\} = \mathbb{Z}   \quad
\textrm{Loc}, \textrm{BLoc}, \textrm{Atoms} \subseteq \textrm{Values} 
\\
&\textrm{Var}  \triangleq  \{x,y,...\} \quad 
\textrm{BKVar} \triangleq  \{b_1,b_2,...\}  \quad
\textrm{FVar} \triangleq  \{f_1,f_2,...\} \\
&\textrm{Stores}_V  \triangleq \textrm{Var} \rightarrow \textrm{Values}\quad \quad 
\textrm{Stores}_B  \triangleq \textrm{BKVar} \rightarrow \textrm{BLoc} \\
&\textrm{Stores}_F  \triangleq \textrm{FVar} \rightarrow \textrm{BLoc}^* 
\\
&\textrm{Heaps}_B \triangleq \textrm{BLoc} \rightharpoonup_{\textrm{fin}} \textrm{Loc}^* \quad \quad
\textrm{Heaps}_V  \triangleq \textrm{Loc} \rightharpoonup_{\textrm{fin}} \textrm{Values} \quad \quad 
\end{align*}

where $\rightharpoonup$ and $\rightharpoonup_{\textrm{fin}}$ can be found in \cite{hoare1969axiomatic}. $ \textrm{Loc} $, $ \textrm{BLoc} $ and $ \textrm{Atoms} $ are disjoint, and
$\textrm{BLoc}^* = \{ (bloc_1,..., bloc_n)  \mid $ $ bloc_i \in \textrm{BLoc}, n \in \mathbb{N} \}$, and for any element $(bloc_1,..., $ $bloc_n)$ of $\textrm{BLoc}^*$, $|(bloc_1,..., bloc_n)|$ means its length, that is, $|(bloc_1,..., bloc_n)| = n$. Similarly, $\textrm{Loc}^* = \{ (loc_1,..., loc_n)  \mid  loc_i \in \textrm{Loc},  n \in \mathbb{N} \}$.

To make sure the successful allocation, we place the following requirements on the sets of $ \textrm{Loc} $ and $ \textrm{BLoc} $. For any positive integer $m$, there are infinitely many sequences of length $m$ of consecutive integers in $ \textrm{Loc} $. 
For any positive integer $n$, there are infinitely many sequences of length $m$ of discrete integers in $ \textrm{BLoc} $. 
These requirement are satisfied if we take $ \textrm{Loc} $ and $ \textrm{BLoc} $ as the non-negative integers. Then we can take $ \textrm{Atoms} $ as the negative integers, and {\bf nil} as -1.

The states of our language are defined as follows:
\begin{equation*}
\textrm{States} \triangleq \textrm{Stores}_F\times\textrm{Stores}_B\times\textrm{Stores}_V\times\textrm{Heaps}_B\times\textrm{Heaps}_V 
\end{equation*}

A state $\sigma \in \textrm{States}$ is a 5-tuple:  $(s_F,s_B,s_V,h_B,h_V)$.

\subsection{Semantics of the modeling language}

Once the states are defined, we can specify the evaluation rules of our new expressions.
Notice that when we try to give out the semantic of an expression, some stores and heaps may not be used.
For example, expression $\#f$ only needs $\textrm{Stores}_F$ and $\textrm{Stores}_V$.
So we will only list the necessary stores and heaps for each expression.
The semantics of the expressions which is similar to LMDSS can be found in \cite{jing2017modeling}.
Here we give out the semantics of new expressions.
One is the block content length expression $\# bk$, which is used to calculate the content size of the block at address $bk$. Note that in the extended language the content of a block will be a sequence of integers. This expression will count the length of this sequence.
So we can find out how much data(integers) has been written in the current block, and then calculate how much data can be write to this clock, for the maximum size of a block in the system is fixed. The denotational semantics of expression $\# bk$ is in the following.




\begin{scriptsize}
$\begin{array}{l}
\llbracket \# bk \rrbracket (s_F)(s_B)(s_V)(h_B) = k, \\
\text{where} ~ h_B(\llbracket bk \rrbracket (s_F)(s_B)(s_V)(h_B))=(loc_1,loc_2,...,loc_k), \\
\end{array}$
\end{scriptsize}

The others are file expressions, the denotational semantics of file expressions is ruled out by the following functions:


%


\begin{scriptsize}
$\begin{array}{l}
\llbracket \mathbf{nil} \rrbracket (s_F) = ();
\\[1mm]
\llbracket f \rrbracket (s_F) = s_F(f);
\\[1mm]
\llbracket fe \cdot bk \rrbracket (s_F)(s_B) = (bloc_1,\ldots,bloc_n,bloc'), \\
\mbox{if} ~ \llbracket fe \rrbracket (s_F)(s_B)=(bloc_1,\ldots,bloc_n) ~ \mbox{and} ~ \llbracket bk \rrbracket (s_B)=bloc';
\\[1mm]
\llbracket fe_1 \cdot fe_2 \rrbracket (s_F)(s_B) = (bloc_1,\ldots,bloc_n,{bloc_1}',\ldots,{bloc_n}'),\\
\mbox{if} ~ \llbracket fe_1 \rrbracket (s_F)(s_B)=(bloc_1,\ldots,bloc_n) ~ \mbox{and} ~ \\
\llbracket fe_2 \rrbracket (s_F)(s_B)=({bloc_1}',\ldots,{bloc_n}').
\end{array}$
\end{scriptsize}

To state the semantics of the modeling language formally, following \cite{yang2001local}, we use the crucial operations on the heaps:
\begin{itemize}
	\item $\dom(h_H)$ denotes the domain of a heap $h_H \in \textrm{Heaps}_H$, where $h_H$ range over $h_V$ and $h_B$, $\textrm{Heaps}_H$ range over $\textrm{Heaps}_V$ and $\textrm{Heaps}_B$, 
	and $\dom(s_S)$ is the domain of a store $s_S \in \textrm{Stores}_S$, where  $s_S$ range over $s_V$, $s_B$ and $s_F$, $\textrm{Stores}_S$ range over $\textrm{Stores}_V$, $\textrm{Stores}_B$ and $\textrm{Stores}_F$;
	\item $ h_H \# {h_H}' $ indicates that the domains of $h_H$ and ${h_H}'$ are disjoint;
	\item $ h_H * {h_H}' $ is defined when $ h_H \# {h_H}' $ holds and is a finite function obtained by taking the union of $h_H$ and ${h_H}'$;
	\item $ i \mapsto j $ is a singleton partial function which maps $ i $ to $ j $;
	\item for a partial function $ f $ from $ U $ to $ W $ and $ f' $ from $ V $ to $ W $,  the partial function $f[f']$ from $ U $ to $ W $ is defined by:
	$$ f[f'](i) \triangleq \left\{
	\begin{array}{rll}
	&f'(i)               &  \textrm{if} \ i \  \in \dom(f'),\\
	&f(i)                &  \textrm{if} \ i \  \in \dom(f),\\
	&\textrm{undefined}           &\textrm{otherwise}.  \\
	\end{array} \right.  $$
\end{itemize}

Here we give out the denotational semantics of our new commands.

\begin{scriptsize}
	\begin{eqnarray*}
		& & \llbracket f:=\mathbf{create}(bk^*)\rrbracket \sigma = (s_F[(\llbracket bk_1 \rrbracket \sigma,...,\llbracket bk_n \rrbracket \sigma)/f],s_B,s_V,h_B,h_V)
		\\
		& & \mbox{where the term of sequence determined by $h_B(\llbracket bk_i \rrbracket \sigma)$ ($1 \leq i \leq n$) belong }
		\\
		& & \mbox{to} \textrm{dom}(h_V); 
		\\[1mm]
		& & \llbracket \mathbf{attach}(f,bk^*)\rrbracket \sigma =(s_F[(s_F(f)\centerdot (\llbracket bk_1 \rrbracket \sigma,...,\llbracket bk_n \rrbracket \sigma))/f],s_B,s_V,h_B
		\\
		& & ,h_V) \mbox{where the term of sequence determined by $h_B(\llbracket bk_i \rrbracket \sigma)$ ($1 \leq i \leq n$) }
		\\
		& & \mbox{belong to $\textrm{dom}(h_V)$; }
		\\[1mm]
		& & \llbracket \mathbf{delete}~f\rrbracket (s_F,s_B,s_V,h_B,h_V) = (s_F[\llbracket \mathbf{nil}\rrbracket \sigma/f],s_B,s_V,h_B,h_V);\\
		& & \llbracket b:=\mathbf{allocate}(\bar{e}) \rrbracket \sigma = (s_F,s_B[bloc/b],s_V,h_B[(loc_1,...,loc_n)/bloc],
		\\
		& & [h_V|loc_1:\llbracket e_1 \rrbracket \sigma,...,loc_n:\llbracket e_n \rrbracket \sigma])\\
		& & \mbox{where $bloc \in \textrm{BLoc} - \textrm{dom}(h_B)$ , and $loc_1,...,loc_n \in \textrm{Loc} - \textrm{dom}(h_V)$;}
		\\[1mm]
		& & \llbracket \mathbf{append}(bk,e) \rrbracket \sigma = (s_F,s_B,s_V,h_B[(loc_1,...,loc_m,loc_{m+1})/\llbracket bk \rrbracket \sigma],\\
		& & [h_V|loc_{m+1}:\llbracket e \rrbracket \sigma]) ~~~ \mbox{where $ h_B(\llbracket bk \rrbracket \sigma)=(loc_1,...,loc_m) $, the term of }\\
		& & \mbox{ sequence $ (loc_1,...,loc_m) $ belong to $\textrm{dom}(h_V)$, and $loc_{m+1} \in \textrm{Loc} -  $} \\
		& & \mbox{$\textrm{dom}(h_V)$;}
		\\[1mm]
		& & \llbracket x:=\{bk.e\} \rrbracket \sigma = (s_F,s_B,s_V[h_V(loc_i)/x],h_B,h_V)
		\\
		& & \mbox{where $ h_B(\llbracket bk \rrbracket \sigma)=(loc_1,...,loc_m) $,$\llbracket e \rrbracket \sigma = i$ and  $1 \leq i \leq m$;} 
		\\[1mm]
		& & \llbracket b:=bk \rrbracket \sigma = (s_F,s_B[\llbracket bk \rrbracket \sigma/b],s_V,h_B,h_V);
		\\[1mm]
		& & \llbracket f.e:=bk\rrbracket \sigma = (s_F[(bloc_1,...,\llbracket bk \rrbracket \sigma,...,bloc_n)/f],s_B,s_V,h_B,h_V) 	\\
		& & \mbox{where  $s_F(f)=(bloc_1,...,bloc_i,...,bloc_n)$,$\llbracket e \rrbracket \sigma = i$ and  $1 \leq i \leq n$;} 
		\\[1mm]
		& & \llbracket \mathbf{delete}~b \rrbracket \sigma = (s_F,s_B,s_V,h_B\rceil(\textrm{dom}(h_B)-\{\llbracket b \rrbracket \sigma\}),h_V).
	\end{eqnarray*}
\end{scriptsize}

\subsection{A Semantic Example} \label{sect3.4}

At this stage we will give a slightly nontrivial example of a formal proof with our modeling language.
In the example, according to the denotational semantics,
we derive a final state from the initial state by steps.
The final state describes the configuration of the system after running the program,
which achieved the expected results.
This results show the modeling language is expressive enough to describe the management programs in BCSSs, and the semantics is strict and feasible.
The modeling language and its semantic provide the basis for the nature of BCSSs research.
Hence, we can define our assertion language and prove the correctness of management programs.
The description of the example and all proofs are relegated to Appendix A.

\section{Assertion Language for BCSSs} \label{sect4}

To describe the properties of BCSSs, we construct a logic to deal with both locations and blocks. 
\textbf{BI} Pointer Logic \cite{ishtiaq2001bi} provides a powerful formalism for ordinary locations. 
We extend \textbf{BI} pointer logic with the file and block expressions to describe the file-block relationship. Following \cite{yang2001local}, the formal semantics of an assertion is defined by a satisfactory relation `` $ \models $ '' between a state and an assertion. 
$\sigma \models p $ means that the assertion $ p $ holds in the state $ \sigma $.

We rename the assertions of \textbf{BI} pointer logic as \textit{location assertions}, meanwhile call the block formulas as \textit{block assertions}. 
Both location assertions and block assertions are built on expressions.
The key challenge is that the assertion language should be able to express the two-tier structure of the framework.

\subsection{Location Assertion}

\noindent
\textit{Syntax}.
Location assertions describe the properties of locations. 
However, there are some differences between the \textbf{BI} assertions and the location assertions. 
Quantifiers over block and file variables are allowed in location assertions. 
This makes the location assertions much more complicated since they are not first-order quantifiers.

\begin{eqnarray*}
	\alpha & ::= & \mathbf{true}_V ~|~ \mathbf{false}_V ~|~ \neg \alpha ~|~ \alpha_1 \land \alpha_2 ~|~ \alpha_1 \lor \alpha_2 ~|~ \alpha_1 \rightarrow \alpha_2  
	\\
	& &   ~|~ e_1 = e_2 ~|~ e_1 \leq e_2 ~|~  \forall x. \alpha ~|~ \exists x. \alpha ~|~  ~|~ \forall f.\alpha ~|~ \exists f.\alpha ~|~   
	\\
	& & \forall b.\alpha ~|~ \exists b.\alpha ~|~  \mathbf{emp}_V ~|~ e \mapsto e' ~|~ \alpha_1 * \alpha_2 ~|~ \alpha_1 \si \alpha_2
\end{eqnarray*}

\noindent
\textit{Semantics}.
Intuitively, the truth value of a location assertion depends only on the stores and heaps for location variables. 
Given a location assertion $ \alpha $, we define $ \sigma \models \alpha $ by induction on $ \alpha $ in the following.

\noindent
\begin{itemize}
	\item $\sigma \models \mathbf{true}_V$;
	\item $\sigma \models \alpha_1 \rightarrow \alpha_2 ~ \text{iff if} ~ \sigma \models \alpha_1 ~ \text{then} ~ \sigma \models \alpha_2$;
	\item $\sigma \models (\forall x. \alpha) ~ \text{iff} ~ s_F,s_B,s_V[n/x],h_B,h_V \models \alpha \text{ for any } n \in \textrm{Loc}$;
	\item $\sigma \models (\forall f. \alpha) ~ \text{iff} ~
	s_F[\bar{n}/f],s_B,s_V,h_B, h_V \models \alpha $\\
	~~~~~~for any  $\bar{n}=(n_1, n_2, \ldots, n_k) \in \textrm{BlockLoc*}$;
	\item $\sigma \models (\exists f. \alpha) ~ \text{iff} ~ s_F[\bar{n}/f],s_B,s_V,h_B, h_V \models \alpha$\\
	~~~~~~for some $\bar{n}=(n_1, n_2, \ldots, n_k) \in \textrm{BlockLoc*}$;
	\item $\sigma \models \mathbf{emp}_V ~ \text{iff} ~ \text{dom}(h_V) = \emptyset$;
	\item $\sigma \models e \mapsto e' ~ \text{iff} ~ \text{dom}(h_V) = \{\llbracket e \rrbracket \sigma \} \land h_V(\llbracket e \rrbracket \sigma)=\llbracket e' \rrbracket \sigma$;
	\item $\sigma \models \alpha_1 * \alpha_2$ iff there exist $h_V^1, \, h_V^2$ with \\
	~~~~~~$h_V^1 \# h_V^2$ and $h_V=h_V^1 \!*\! h_V^2 $ such that\\
	~~~~~~$s_F,s_B,s_V,h_B,h_V^1 \models \alpha_1$ and $s_F,s_B,s_V,h_B,h_V^2 \models \alpha_2$.
\end{itemize}

Definition of other location assertions are similar to these in \textbf{BI} pointer logic.

\subsection{Block Assertion}

\noindent
\textit{Syntax}.
Block assertions describe the properties about blocks. 
So they include the logic operations and heap operations on block expressions besides ordinary logical connectives and quantifies.

\begin{eqnarray*}
	\beta & ::= & \mathbf{true}_B ~|~ \mathbf{false}_B ~|~ \neg \beta ~|~ \beta_1 \land \beta_2 ~|~ \beta_1 \lor \beta_2 ~|~ \beta_1 \rightarrow \beta_2 
	\\
	& &   ~|~   bk_1 == bk_2 ~|~ bk == bk_1 \circledast  \dots \circledast bk_n ~|~  \forall x. \beta ~|~   
	\\ 
	& & \exists x. \beta ~|~  \forall b. \beta ~|~ \exists b. \beta ~|~ \forall f. \beta ~|~  \exists f. \beta ~|~ \mathbf{emp}_B ~|~  
	\\
	& &   bk \mapsto (\bar{e}) ~|~       \beta_1 * \beta_2 ~|~ \beta_1 \si \beta_2 ~|~ fe=fe' ~|~ b_1 \looparrowright b_2 
\end{eqnarray*}


\noindent
\textit{Semantics}.
Obviously, the truth value of a block assertion depends on the stores and heaps for blocks. 
Given a block assertion $ \beta $, we define $ \sigma \models \beta $ by induction on $ \beta $ as follows.

\noindent

\begin{itemize}
	\item $\sigma \models \mathbf{true}_B$;
	\item $\sigma \models bk_1 == bk_2 ~ \text{iff} ~ \llbracket bk_1 \rrbracket \sigma = \llbracket bk_2 \rrbracket \sigma$;
	\item $\sigma \models bk == bk_1 \circledast \dots \circledast bk_n ~ \text{iff} ~ \llbracket bk \rrbracket \sigma = \llbracket bk_1 \rrbracket \sigma \cdot  \dots \cdot \llbracket bk_n \rrbracket \sigma ~ \text{and} ~  \llbracket bk_i \rrbracket \sigma \bot \llbracket bk_j \rrbracket \sigma$ \\
	~~~~~~ $ ~ \text{for all} ~ i,j \in \mathbb{N}, 1 \leq i < j \leq n$ 
	\item $\sigma \models (\forall x. \beta) ~ \text{iff} ~ s_F,s_B,s_V[n/x],h_B,h_V \models \beta $ \\ 
	~~~~~~ for any  $ n \in \textrm{Loc}$;
	\item $\sigma \models (\exists b. \beta) ~ \text{iff} ~ s_F,s_B[n/b],s_V,h_B,h_V \models \beta $ \\
	~~~~~~ for some $ n \in \textrm{BLoc}$;
	\item $\sigma \models (\forall f. \beta)$ iff $s_F[\bar{n}/f],s_B,s_V,h_B,h_V \models \beta$ \\
	~~~~~~ for any $\bar{n}=(n_1, n_2, \ldots, n_k) \in \textrm{BLoc*}$;
	\item $\sigma \models bk \mapsto (\bar{e})$ iff $\text{dom}(h_B)=\{\llbracket bk \rrbracket \sigma\}   $, \\
	~~~~~~ $ h_B(\llbracket bk \rrbracket \sigma)=(\llbracket e_1 \rrbracket \sigma,...,\llbracket e_n \rrbracket \sigma) $, and \\
	~~~~~~ $ \llbracket e_i \rrbracket \sigma \in \text{dom}(h_V) ~ (1 \leq i \leq n)$;
	\item $\sigma \models (\beta_1 \si \beta_2)$ iff for any block heap ${h_B}'$, \\
	~~~~~~ if ${h_B}' \# h_B$ and $s_F,s_B,s_V,{h_B}',h_V \models \beta_1$, \\
	~~~~~~ then $s_F, s_B,s_V, h_B \!*\! {h_B}',h_V \models \beta_2$;
	\item $\sigma \models fe=fe' ~ \text{iff} ~ \llbracket fe \rrbracket \sigma = \llbracket fe' \rrbracket \sigma$;
	\item $\sigma \models b_1 \looparrowright b_2$ iff $s_B(b_1) \in \mathrm{dom}(h_B)$ when $b_1 = b_2$ or \\
	~~~~~~ $h_V(h_B(s_B(b_1))) = h_V(h_B(s_B(b_2)))$ when $b_1 \not = b_2$.
\end{itemize}

where the notation $\cdot$ means the concatenation of sequences, and $\bot$ denotes two sequences have no common subsequence.

It is noticed that quantifiers over block and file variables are necessary for the assertions. 
In the specification language, we need to express the existence of a sequence of addresses due to the inherent complexity of BCSSs. 
If the length of the sequence is dynamic, for example in a while loop, we cannot use the existence of location variables because the number of the variables is indeterminate.
We address this case by introducing the quantifier over block variable that implicitly shows the dynamic length of the address sequence.

In addition, we define some new block assertions to describe the address sequence and the content of a block, for example, $bk \mapsto (\bar{e})$ and $b_1 \looparrowright b_2$. In these assertions, we straightly show the content of blocks and leaves the address sequence as an implied condition. In the specification language, some commands only change the content of a block, so we do not need to describe the address sequence in pre- and post-conditions, this improvement permits the concise description of the inference rules. Furthermore, with the quantifiers over block variables, the assertion language could express the change of block content flexibly.

\subsection{Global Assertion}

\noindent
\textit{Syntax}.
Roughly speaking, (\textit{location assertion},\textit{block assertion}) pairs are called \textit{global assertions}, 
and the two components of a pair effect each other. 
We use the symbol $ p $ to stand for the global assertions, with the following BNF equation:

\begin{eqnarray*}
	p & ::= & \langle \alpha,\beta \rangle ~|~ \mathbf{true} ~|~ \mathbf{false} ~|~ \neg p ~|~ p_1 \land p_2 ~|~ p_1 \lor p_2 ~|~ p_1 \rightarrow p_2 
	\\
	& &  ~|~    \forall x. p ~|~ \exists x. p ~|~  \forall b. p ~|~ \exists b. p ~|~ \forall f. p ~|~ \exists f. p ~|~  
	\\
	& &  \mathbf{emp} ~|~ p_1 * p_2 ~|~ p_1 \si p_2 
\end{eqnarray*}

\noindent
\textit{Semantics}.
The truth value of a global assertion depends on all kinds of stores and heaps. 
Given a global assertion $ p $, we define $ s_V,s_B,s_F,h_V,h_B \models p $ by induction on $ p $ in the following.

\noindent
\begin{itemize}
	\item $ \mathbf{true} \triangleq  \langle \mathbf{true}_V,\mathbf{true}_B \rangle$; $ \mathbf{emp} \triangleq \langle \mathbf{emp}_V,\mathbf{emp}_B \rangle $;
	\item $\sigma \models \langle \alpha,\beta \rangle$ iff $\sigma \models \alpha$ and $\sigma \models \beta$;
	\item $\sigma \models \neg p$ iff $\sigma \not\models p$;
	\item $\sigma \models p_1 \lor p_2$ iff $\sigma \models p_1$ or $\sigma \models p_2$;
	\item $\sigma \models p_1 \rightarrow p_2$ iff  if $\sigma \models p_1$ then $\sigma \models p_2$;
	\item $\sigma \models \forall x. p$ iff $s_V[n/x],s_B,s_F,h_V,h_B \models p$ \quad for any $n \in \textrm{Loc}$;
	\item $\sigma \models \forall b. p$ iff $s_V,s_B[n/b],s_F,h_V,h_B \models p$ \quad for any $n \in \textrm{BLoc}$;
	\item $\sigma \models \forall f. p$ iff $s_V,s_B,s_F[\bar{n}/f],h_V,h_B \models p$ \\
	~~~~~~ for any $\bar{n}\!=\!(n_1, n_2, \!\ldots\!, n_k) \!\in\! \textrm{BLoc*}$;
	\item $\sigma \models \mathbf{emp}$ iff $\dom(h_V) = \emptyset$ and $\dom(h_B) = \emptyset$;
	\item $s_V,s_B,s_F,h_V,h_B \models p_1 * p_2$ iff there exists $h_H^1, \, h_H^2$ with \\
	~~~~~~ $h_H^1 \# h_H^2$ and $h_H=h_H^1 * h_H^2 $ such that \\
	~~~~~~ $s_V,s_B,s_F,h_V^1,h_B^1 \models p_1$ and $s_V,s_B,s_F,h_V^2,h_B^2 \models p_2$ \\
	where $ h_H^i $ range over $ h_V^i $ and $ h_B^i $ and $ i=1,2 $;
	\item $\sigma \models (p_1 \si p_2)$ iff for any block heap ${h_H}'$, \\
	~~~~~~  if ${h_H}' \# h_H$ and $s_V,s_B,s_F,{h_V}',{h_B}' \models p_1$,\\
	~~~~~~ then $s_V, s_B,s_F, h_V * {h_V}', h_B * {h_B}' \models p_2$, \\
	where $ h_H $ range over $ h_V $ and $ h_B $.
\end{itemize}

Notice that, in the semantic of $\langle \alpha,\beta \rangle$, we do not just simply use $\alpha \land \beta$ here, because such form will mislead the reader that block assertions can be used all alone.
In practical, the content of blocks are stored in $\textrm{Heaps}_V$ for the architecture of BCSSs.
In our setting, location assertions and block assertions work together to describe the blocks correctly, and the pair form can elegantly express the relation between the two tiers as well as the properties of BCSSs.

It is convenient to introduce several complex forms as abbreviations below, which are from \cite{reynolds2002separation}, where some similar terminologies can be found. 
For brevity, we use location variable $l$ to denote the address, and location variable $x$ denotes the content, which will be more readable.

\noindent
\begin{itemize}
	\item $e \mapsto e_1,...,e_n \triangleq e_1 \mapsto e_1 * ... * e+n-1 \mapsto e_n$
	\item $bk \mapsto \bar{l} \triangleq bk \mapsto (l_1,...,l_n)$
	where $\#bk = |\bar{l}| = n$;
	\item $\bar{l} \looparrowright (\bar{e} | i \rightarrowtail x) \triangleq l_1 \mapsto e_1 * ... * l_i \mapsto x * ... * l_n \mapsto e_n$;
	\item $\bar{l} \looparrowright (\bar{e}[x'/x] | i \rightarrowtail x'') \triangleq l_1 \mapsto {e_1}[x'/x] * ... * l_i \mapsto x'' * ... * l_n \mapsto {e_n}[x'/x]$;
	\item $bk \looparrowright \bar{e} \triangleq \exists \bar{x}. \langle \bar{x} \looparrowright \bar{e} , bk \mapsto \bar{x} \rangle $ 
	where $|\bar{x}| = |\bar{e}|$ and the term of sequence $\bar{x}$ are disjoint;
	\item $e \hookrightarrow e'  \triangleq e \mapsto e' * \mathbf{true}_V $;
	\item $bk \hookrightarrow (\bar{e})  \triangleq bk \mapsto (\bar{e}) * \mathbf{true}_B $.
\end{itemize}

%

The advantages of global assertions are threefold:
\begin{enumerate}
	\item The pair form of global assertions is consistent with the hierarchical structural of the framework. Location and block assertions can describe the state of $\textrm{Heaps}_V$ and $\textrm{Heaps}_B$, respectively. 
	Meanwhile, the pair can exactly express the content of each block.
	\item We mainly focus on the properties of blocks in BCSSs. With the new defined notations $==$, $\mapsto$, and $\looparrowright$, we are able to describe the address sequence and the content of a block.
	\item We introduce quantifiers over block and file variables, which makes the assertion language more expressive. Combined with the new defined assertions, we are able to write the pre- and post-conditions in Hoare triples, especially the while loop invariants.
\end{enumerate}

With the advantages, our logic is quite different from SL, and more complicated. The assertion language can support the specification language well. More discussions about assertion language will be discussed in the future.

\section{Hoare-style Proof System for BCSSs} \label{sect5}

For reasoning about management programs in BCSSs, combining with the modeling language and the assertion language above, we introduce a Hoare-style proof system.
The main concept in Hoare-style logic are Hoare triples, which consists of a precondition, a program, and a postcondition.
In our setting, we restrict the pre- and post-conditions to be global assertions only.
Formally, a Hoare triple is of the form $\{ p \} ~C~ \{ q \}$, where $p$ and $q$ are global assertion, and $C$ is a command.

\subsection{Local Reasoning}
To define the semantics of Hoare triples , two point of local reasoning from the Introduction are recalled as follows.

\textbf{Point 1:} Every valid specification $ \{ p \} C \{ q \} $ is ``tight'', that is to say when $ C $ runs in a state satisfying $ p $, it must dereference only those cells that guaranteed to exist by $ p $ or allocated during the execution by possible commands like $ x:=\mathbf{cons}(e_1,...,e_n) $ or $ f:=\mathbf{create}(bk^*) $.

\textbf{Point 2:} An inference rule about specification, known as the Frame Rule, enables us obtain $ \{ p*r \} C \{ q*r \} $ from the initial specification $ \{ p \} C \{ q \} $ of a command when premise of the Frame Rule is valid. 
Thus, we can concentrate on the variables and parts of heaps that are actually accessed by the program which is ensured correct by \textbf{Point 1}, and we can extend a local specification by using the Frame Rule. 
We will proof soundness of the Frame Rule in Sect.\ref{sect6}.

\subsection{Interpretation of Hoare Triples}

\noindent
\textit{\textbf{Interpretation of Hoare Triples}}.
According to \textbf{Point 1}, a tight interpretation of specifications does not dereference the non-addresses, otherwise it will lead to an error \textit{memory fault} or \textit{fault} for short.

We use the terminologies below to specify the certain properties of the program execution.
``$ \langle C,(s_V,s_B,s_F,h_V,h_B) \rangle $ is safe '' when $ \langle C,(s_V,s_B,s_F,h_V,h_B) \rangle $
$\rightsquigarrow^* \textit{fault} $ is impossible;
``$ \langle C,(s_V,s_B,s_F,h_V,h_B) \rangle $ must terminate normally'' when $ \langle C,(s_V,s_B,s_F,h_V,h_B) \rangle $ is safe and 
there exists no infinite $ \rightsquigarrow- $ sequences starting from $ \langle C,(s_V,s_B,s_F,h_V,h_B) \rangle $. Each Hoare triple can be interpreted for partial correctness and for total correctness as follows. 

\textbf{Partial Correctness:}

\begin{tabular}{ll}
	 $ \{ p \} C \{ q \} $ & is $ \mathbf{true} $  iff for $ \forall (s_V,s_B,s_F,h_V,h_B).$\\
	& if $(s_V,s_B,s_F,h_V,h_B) \models p$ then \\
	& 1.$\langle C,(s_V,s_B,s_F,h_V,h_B) \rangle $ is safe  \\
	& 2.  if $ \langle C,(s_V,s_B,s_F,h_V,h_B) \rangle \rightsquigarrow^* $\\
	&  $ ({s_V}',{s_B}',{s_F}',{h_V}',{h_B}')$ \\
	&  then $ ({s_V}',{s_B}',{s_F}',{h_V}',{h_B}') \models q$ 
\end{tabular}

\textbf{Total Correctness:}

\begin{tabular}{ll}
	$ \{ p \} C \{ q \} $ & is $ \mathbf{true} $  iff for $ \forall (s_V,s_B,s_F,h_V,h_B).$\\
	& if $(s_V,s_B,s_F,h_V,h_B) \models p$ then \\
	& 1.$\langle C,(s_V,s_B,s_F,h_V,h_B) \rangle $ must terminate  \\
	& 2.  if $ \langle C,(s_V,s_B,s_F,h_V,h_B) \rangle \rightsquigarrow^* $\\
	&  $({s_V}',{s_B}',{s_F}',{h_V}',{h_B}')$ \\
	&  then $ ({s_V}',{s_B}',{s_F}',{h_V}',{h_B}') \models q$ 
\end{tabular}

\noindent
\textit{\textbf{Transfer Function}}.
According to syntax, any Boolean expression cannot be any kind of assertion. 
Hence, it cannot appear in any specification. 
Thus Boolean expressions in \textbf{if} or \textbf{while} commands cannot be used directly in pre- or post-conditions. 
To fix this problem, we define the transfer function $ T $ that maps Boolean expressions to global assertions.

$T$ $\in$ Transfer functions = Boolean expressions $\rightharpoonup$ Global assertions
\begin{align*}
&
T(e_1 = e_2) = \langle e_1 = e_2, \mathbf{true}_B \rangle
\\
&
T(e_1 \leq e_2) = \langle e_1 \leq e_2, \mathbf{true}_B \rangle
\\
&
T(bk_1 == bk_2) = \langle \mathbf{true}_V, bk_1 == bk_2 \rangle
&
\\
&
T(be_1 \land be_2) = T(be_1) \land T(be_2)
\\
&
T(be_1 \lor be_2) = T(be_1) \lor T(be_2)
\\
&
T(\mathbf{true}) = \mathbf{true} ~~~ T(\mathbf{false}) = \mathbf{false}
~~~
T(\lnot be) = \lnot T(be) 
\\
&   T(\mathbf{abort}) = \mathbf{abort}
\end{align*}


\subsection{The Proof System}
In this section, we present the proof system for BCSSs, which consists of the axioms, the compound command rules, and the structural rules. 
We propose one or more axioms for each basic command. 
Similar to \textbf{BI} pointer logic, the compound command rules and the structural rules does not depend on particular programming constructs.

\noindent
\textit{\textbf{Axioms}}.
Firstly, we give the original axioms of SL. The pre- and post-conditions are both given in the form of $\langle \text{location assertion},\text{block assertion} \rangle$ since they are global assertions.
It is observed that these cases do not involve files or block operations, so all of the rules given here remain valid for our logic.


\begin{scriptsize}
\begin{itemize}
	\item Skip
\end{itemize}
\begin{equation}\label{A1}
\{p\} ~ \mathbf{skip} ~ \{p\} \tag{A1}
\end{equation}

\begin{itemize}
	\item The Simple Assignment form (SA)
\end{itemize}
\begin{equation}\label{A2}
\begin{split}
& \{ \langle x=x' \land \mathbf{emp}_V, \mathbf{emp}_B \rangle \} ~  x:=e  ~  \\
& \{ \langle x=e[x'/x] \land \mathbf{emp}_V ,\mathbf{emp}_B \rangle \}
\end{split} \tag{A2}
\end{equation}
where $x'$ is distinct from $x$.

\begin{itemize}
	\item The Location Allocation form (LA)
\end{itemize}
\begin{equation}\label{A3}
\begin{split}
& \{ \langle x=x' \land \mathbf{emp}_V, \mathbf{emp}_B \rangle \} 
~  x:=\mathbf{cons}(e_1, ... ,e_n)  ~  \\
& \{ \langle x \mapsto e_1[x'/x], ... ,e_n[x'/x] ,\mathbf{emp}_B \rangle \}
\end{split}  \tag{A3}
\end{equation}
where $x'$ is distinct from $x$.

\begin{itemize}
	\item The Location Lookup form (LL)
\end{itemize}
\begin{equation}\label{A4}
\begin{split}
& \{ \langle x=x' \land e \mapsto x'', \mathbf{emp}_B \rangle \} 
~  x:=[e] ~ \\
& \{ \langle x=x'' \land e[x'/x] \mapsto x'' ,\mathbf{emp}_B \rangle \}
\end{split} \tag{A4}
\end{equation}
where $x$,  $x'$, and $x''$ are distinct.

\begin{itemize}
	\item The Location Mutation form (LM)
\end{itemize}
\begin{equation}\label{A5}
\{ \langle e \mapsto - , \mathbf{emp}_B \rangle \} 
~ [e]:=e'  ~
\{ \langle e \mapsto e' ,\mathbf{emp}_B \rangle \} \tag{A5}
\end{equation}

\begin{itemize}
	\item The Deallocation form (DL)
\end{itemize}
\begin{equation}\label{A6}
\{ \langle e \mapsto - , \mathbf{emp}_B \rangle \} 
~ \mathbf{dispose}(e) ~  
\{ \langle \mathbf{emp}_V ,\mathbf{emp}_B \rangle \} \tag{A6}
\end{equation}
\end{scriptsize}

Commands of file will change $\textrm{Stores}_F$, which may have impact on both location assertions and block assertions. To make axioms simple, we only discuss the situation when the file changed in commands does not appear in the location assertion.
In A8, one may argue that it cannot be applied when file expression $\#f$ appears in the assertion.
To this end, we construct an alternative axiom for A8, which is described latter in Sect.\ref{sect5.3}.
For the moment, we ignore particular specifications, and give the common axioms that suffice for formal proofs.

\begin{scriptsize}
\begin{itemize}
	\item The File Creation form (FC) 
\end{itemize}
\begin{equation}\label{A7}
\{ \langle \alpha,f=\mathbf{nil} \land \beta \rangle \} 
~ f:=\mathbf{create}(bk_1, ..., bk_n) ~ \{ \langle \alpha , \beta[(bk_1 , ... , bk_n)/f] \rangle \}\tag{A7}
\end{equation}
where $f$ is not free in $\alpha$ or $ bk_1, ..., bk_n $.

\begin{itemize}
	\item The Block Address Appending form (BAA)
\end{itemize}
\begin{equation}\label{A8}
\begin{split}
&\{ \langle \alpha,f=f' \land \beta \rangle \} 
~ \mathbf{attach}(f,bk_1,...,bk_n) ~ \\
&\{ \langle \alpha[f'/f],\beta[f' \cdot (bk_1[f'/f],...,bk_n[f'/f])/f]  \}
\end{split} \tag{A8}
\end{equation}
where $f'$ is distinct from $f$.

\begin{itemize}
	\item The File Deletion form (FD)
\end{itemize}
\begin{equation}\label{A9}
\{ \langle \alpha , f = f' \land \beta \rangle \} 
~ \mathbf{delete}~f  ~
\{ \langle \alpha[f'/f] , \beta[\mathbf{nil}/f] \rangle \} \tag{A9}
\end{equation}
where $f'$ is distinct from $f$.
\end{scriptsize}

Finally, we come to the block commands for manipulating the $\textrm{Heaps}_B$, which gives rise to a surprising variety of axioms.
For some commands, two axioms are given to make the specification language more expressive.
To explain these axioms, we begin with A10. 
Here, for a precondition with empty $\textrm{Heaps}_V$ and $\textrm{Heaps}_B$ (to show locality), the postcondition says that a new block is created with a sequence of location addresses, and the content of these addresses is $\bar{e}$. 
The restrictions on this axiom are needed to avoid aliasing.
But this axiom cannot be applied generally since the postcondition of a specific form restricts our reasoning. Especially, the quantifiers over a sequence of location addresses cannot appear in while loop invariants.
Therefore, an alternative axiom A11 is given to avoid these complex quantification using block variables.
The rest axioms will be proceed similarly if necessary.
Notice that in A12 and A13, we can not substitute $\mathbf{emp}_V$ for $\alpha$ since $bk$ is well-defined.

When we turn to the axiom for block content lookup, the situation becomes more complicated since such command involves refined block content, i.e., query the content of a certain location address belonging to a block.
The difficulty with this axiom is the accumulation of quantifiers.
In A15, one can think of $b'$ as the first part of block $bk$, and $b''$ denotes the latter part. 
While, $b$ denotes a singleton block heap, the content of which is a single address $l$ with content $x''$.

In most time, we use block address assignment command to assign the content of a file block to a new block, so an axiom A17 is added in the special case of A16 to make the specifications more powerful.
The remaining axioms are relatively well-understood, and not explained in detail.

\begin{scriptsize}

\begin{itemize}
	\item The Block Allocation form (BA) 
\end{itemize}
\begin{equation}\label{A10}
\{ \langle \mathbf{emp}_V, \mathbf{emp}_B \rangle \} 
~ b:=\mathbf{allocate}(\bar{e})   ~  
\{ \exists \bar{l}. \langle \bar{l} \looparrowright \bar{e} , b \mapsto \bar{l} \rangle \} \tag{A10}
\end{equation}
where $b$ is not free in $\bar{e}$.

\begin{itemize}
	\item Alternative Axiom for Block Allocation (BAalt) 
\end{itemize}
\begin{equation}\label{A11}
\{ \langle \mathbf{emp}_V, \mathbf{emp}_B \rangle \} 
~ b:=\mathbf{allocate}(\bar{e})   ~  
\{ \exists b'. \langle \mathbf{true}_V , b == b' \land b'  \looparrowright  b' \rangle \}\tag{A11}
\end{equation}
where $b$ is not free in $\bar{e}$.

\begin{itemize}
	\item The Block Content Append form (BCA) 
\end{itemize}
\begin{equation}\label{A12}
\{ \exists \bar{l'}. \langle \bar{l'} \looparrowright \bar{e'} , bk \mapsto \bar{l'} \rangle \} 
~ \mathbf{append}(bk,e)   ~  
\{ \exists \bar{l'},l. \langle \bar{l'} \looparrowright \bar{e'} * l \mapsto e, bk \mapsto \bar{l'} \cdot (l) \rangle  \}\tag{A12}
\end{equation}
where $bk$ is not free in $\bar{e'},\bar{e}$ and $\#bk$ does not appear in $bk$.

\begin{itemize}
	\item Alternative Axiom for Block Content Append (BCAalt) 
\end{itemize}
\begin{equation}\label{A13}
\begin{split}
& \{ \exists b'. \langle \mathbf{true}_V , bk == b' \land b' \looparrowright b' \rangle \} 
~ \mathbf{append}(bk,e)   ~  \\
& \{ \exists b',b'',l. \langle \mathbf{true}_V * l \mapsto e, bk == b' \circledast b'' \land \mathbf{true}_B * b' \looparrowright b' * b'' \mapsto (l) \rangle  \} 
\end{split} \tag{A13}
\end{equation}
where $bk$ is not free in $\bar{e'},\bar{e}$ and $\#bk$ does not appear in $bk$.

\begin{itemize}
	\item The Block Content Lookup form (BCL) 
\end{itemize}
\begin{equation}\label{A14}
\begin{split}
& \{ \exists \bar{l}. \langle  x=x' \land e=i \land \bar{l} \looparrowright (\bar{e} | i \rightarrowtail x'') , bk \mapsto \bar{l} \rangle \} 
~  x:=\{bk.e\} ~ \\
& \{ \exists \bar{l}. \langle x=x'' \land e[x'/x]=i \land \bar{l} \looparrowright (\bar{e}[x'/x] | i \rightarrowtail x'') , bk[x'/x] \mapsto \bar{l} \rangle \}
\end{split} \tag{A14}
\end{equation}
where $x$,  $x'$, and $x''$ are distinct.

\begin{itemize}
	\item Alternative Axiom for Block Content Lookup (BCLalt) 
\end{itemize}
\begin{equation}\label{A15}
\begin{split}
& \{ \exists b,b',b'',l. \langle  \#b = i-1 \land  x=x' \land e = i \land  l \hookrightarrow x'',
bk == b \circledast b' \circledast b''\\
& \land   b' \hookrightarrow (l)  \rangle \} 
~ x:=\{bk.e\} ~  \{ \exists b,b',b'',l. \langle  \#b = i-1  \land x=x'' \land e[x'/x]  \\
&  = i \land  l \hookrightarrow x'', bk == b \circledast b' \circledast b'' \land   b' \hookrightarrow (l)  \rangle \}
\end{split} \tag{A15}
\end{equation}
where $x$,  $x'$, and $x''$ are distinct.

\begin{itemize}
	\item The Block Address Assignment form (BAA) 
\end{itemize}
\begin{equation}\label{A16}
\{\langle \alpha , b == b' \land \beta \rangle\} ~ b:=bk ~ \{\langle \alpha[b'/b] , b == bk[b'/b] \land \beta[b'/b] \rangle\} \tag{A16}
\end{equation}
where $b'$ is distinct from $b$.

\begin{itemize}
	\item Alternative Axiom for Block Address Assignment (BAAalt) 
\end{itemize}
\begin{equation}\label{A17}
\begin{split}
&\{ \langle \#f_2=i-1 \land \alpha,f = f_2 \cdot b' \cdot f_3  \land \beta\rangle \} 
~ b:=f.i  ~  \\
&\{ \langle \#f_2=i-1 \land \alpha, f = f_2 \cdot b' \cdot f_3 \land b==b'  \land \beta\rangle \} 
\end{split} \tag{A17}
\end{equation}
where $b'$ is distinct from $b$, and $b$ is not free in $\alpha$ or $\beta$.

\begin{itemize}
	\item The Block Address Replacement of a File form (BARF)
\end{itemize}
\begin{equation}\label{A18}
\begin{split}
&\{ \langle  \#f_2=e-1 \land \alpha ,f = f_2 \cdot bk' \cdot f_3 \land \beta \rangle \} 
~ f.e:=bk  ~  \\
&\{ \langle  \#f_2=e-1 \land \alpha,f = f_2 \cdot bk \cdot f_3 \land \beta \rangle \}
\end{split} \tag{A18}
\end{equation}
where $f.e$ does not appear in $\alpha$ or $\beta$.

\begin{itemize}
	\item The Block Deletion form (BD)
\end{itemize}
\begin{equation}\label{A19}
\{ \exists \bar{l}.  \langle \bar{l} \looparrowright - ,b \mapsto \bar{l} \rangle \} 
~ \mathbf{delete}~b  ~
\{ \exists \bar{l}. \langle \bar{l} \looparrowright - ,\mathbf{emp}_B \rangle \} \tag{A19}
\end{equation}
where $b$ does not appear in the expressions which are omitted.
\end{scriptsize}

For space reasons, the soundness of the axioms given above is proved in the appendix of \cite{Jin2019}.

\noindent
\textit{\textbf{Rules}}.	
In contrast to aforementioned axioms, the rules are applicable to arbitrary commands, which are be called structural rules.
In our new setting, the command-specific inference rules and the structural rules of SL remain sound.

Rule of composition applies to sequentially executed programs.
\begin{equation}\label{R1}
\begin{tabular}{c}
$\{p\}~C~\{q\}~~~~\{q\}~C'~\{r\}$
\\
\hline
$\{p \}~C;C'~\{ r \}$
\end{tabular} \tag{R1}
\end{equation}

Conditional rule states that a postcondition common to $\mathbf{then}$ and $\mathbf{else}$ part is also a postcondition of the whole $\mathbf{if}$ statement.

\begin{equation}\label{R2}
\begin{tabular}{c}
$\{p \land T(be) \} C \{ q \}~~~~\{p \land \neg T(be) \} C' \{ q \}$
\\
\hline
$\{p \}~\mathbf{if}~be~\mathbf{then}~C~\mathbf{else}~C'~\{ q \}$
\end{tabular}\tag{R2}
\end{equation}

If the evaluation of $be$ causes an abort, we can use the following rule to show the program aborts.

\begin{equation}\label{R2A}
\{r_1 \land T(be) = \mathbf{abort}\} ~\mathbf{if}~be~\mathbf{then}~c_1~\mathbf{else}~c_2~\{\mathbf{abort}\}\tag{R2A}
\end{equation}

While rule states that the loop invariant is preserved by the loop body.
\begin{equation}\label{R3}
\begin{tabular}{c}
$\{p \land T(be) \} C' \{ p \} $
\\
\hline
$\{p \}~\mathbf{while}~be~\mathbf{do}~C'~ \{ p \land \neg T(be) \}$
\end{tabular}\tag{R3}
\end{equation}

Similar to R2A, R3A shows the program aborts when evaluating $be$.

\begin{equation}\label{R3A}
\{r \land T(be) = \mathbf{abort}\}~\mathbf{while}~be~\mathbf{do}~c~\{\mathbf{abort}\}\tag{R3A}
\end{equation}

We use $ \textrm{FV}(C) $ to denote the set of free variables which occur in $ C $, and $ \textrm{Modify}(C) $ represents the set of variables modified by $ C $, which appear on a left side of an assignment statement. 
Let $ Var_V $ be a set of location variables, $ Var_B $ be a set of block variables, $ Var_F $ be a set of file variables, and $ Var_S $ range over $ Var_V $, $ Var_B $, and $ Var_F $; $ Y_S $ be the set of variables that all allocated variables in stores are precisely in and $ Y_S $ range over $ Y_V $, $ Y_B $, and $ Y_F $; $ X_S $ be the set of variables modified by $ C $ and $ X_S  $ range over  $ X_V $,$ X_B $, and $ X_F $. Then, for commands $ C $ with $ \textrm{Modify}_V (C)=X_V $, $ \textrm{Modify}_B (C)=X_B $, $ \textrm{Modify}_F (C)=X_F $, $ X = X_V \cup X_B \cup X_F $, and $ \textrm{FV}(C) \subseteq Y $, we give the following structural rules.

Consequence Rules:
\begin{equation}\label{R4}
\begin{tabular}{c}
$ \models p'  \rightarrow  p  ~~~~ \{p \} C \{ q \} ~~~~ \models q  \rightarrow  q' $
\\
\hline
$ \{p' \} C \{ q' \}$
\end{tabular}\tag{R4}
\end{equation}

Auxiliary Variable Elimination:

\begin{equation}\label{R5}
\begin{tabular}{c}
$\{p \} C \{ q \} $
\\
\hline
$ \{ \exists x_S.p \} C \{ \exists x_S.q \}$
\end{tabular}
\begin{tabular}{c}
$ x_S \notin \textrm{FV}(C) $
\end{tabular}\tag{R5}
\end{equation}
where $ x_S \in Var_S $ and $ x_S $ range over $ x_V $, $ x_B $, and $ x_F $.

Auxiliary Variable Renaming:
\begin{equation}\label{R6}
\begin{tabular}{c}
$\{p \} C \{ q \} $
\\
\hline
$ \{ p[y_S/x_S] \} C \{ q[y_S/x_S] \}$
\end{tabular}
\begin{tabular}{c}
$ x_S \notin \textrm{FV}(C) \land y_S \notin (\textrm{FV}(C) \cup $ \\ $ \textrm{FV}(p) \cup \textrm{FV}(q) )$
\end{tabular}\tag{R6}
\end{equation}
where $ x_S,y_S \in Var_S $, $ x_S $  range over $ x_V $,$ x_B $，  and $ x_F $, and
$ y_S $ range over $ y_V $, $ y_B $， and $ y_F $.

Frame Rule:
\begin{equation}\label{R7}
\begin{tabular}{c}
$\{p \} C \{ q \} $
\\
\hline
$ \{p*r \} C \{ q*r \}$
\end{tabular}
\begin{tabular}{c}
$ \textrm{Modify}_S (C) \cap \textrm{FV}(r) = \emptyset$
\end{tabular} \tag{R7}
\end{equation}
where $ \textrm{Modify}_S (C) $ range over $ \textrm{Modify}_V (C) $, $ \textrm{Modify}_B (C) $, and $ \textrm{Modify}_F (C) $.

\subsection{More about Specification} \label{sect5.3}

\noindent
\textit{\textbf{Additional Axioms}}.
For several of the axioms we have given, there are particular versions. For instance:

\begin{itemize}
	\item The Block Address Appending form (BAAp)
\end{itemize}
\begin{equation}\label{A20}
\begin{split}
& \{ \langle \#f=m \land \alpha,f=f' \land \beta \rangle \} 
~ \mathbf{attach}(f,bk^*) ~ \\
& \{ \langle \#f=m+n \land \alpha[f'/f],\beta[f' \cdot (bk_1[f'/f],...,bk_n[f'/f])/f]  \}
\end{split} \tag{A20}
\end{equation}
where $f'$ is distinct from $f$.

\begin{itemize}
	\item An instance:
\end{itemize}
\begin{equation}
\begin{split}
& \{ \langle \#f_1=3 \land \#f_2=2,f_1=f_2 \cdot b \rangle \} 
~ \mathbf{attach}(f_1,f_1.3) ~ \\
& \{ \langle \#(f_2 \cdot b)=3 \land \#f_2=2,,f_1=f_2 \cdot b \cdot b  \}
\end{split} 
\end{equation}

In practice, such axiom is rarely used. The only time it is necessary to use is when one must prove a specification with the location expression $\#f$. It can usually be avoided by renaming $\# f$ in the program before proving it. The similar situation will occur in the specifications that include $\#b$, which will be discussed in the future.

\noindent
\textit{\textbf{Scalability Issues}}.
For each of commands, we can give three kinds of inference: local, global, and backward-reasoning. SL has some work in this area \cite{ishtiaq2001bi}, and some similar work for BCSSs is done.
For example, the following instance of block content loop command is the most complex one, which is shown in equation A21. 
Obviously, such backwards version can be applied generally since it works for any postcondition ($p$).

\begin{itemize}
	\item The Block Content Lookup Backward form (BCLBw) 
\end{itemize}
\begin{equation}\label{A21}
\begin{split}
& \{ e=i \land \exists \bar{l},x'.(  \langle   \bar{l} \looparrowright (\bar{e} | i \rightarrowtail x') , bk \mapsto \bar{l} \rangle * (  \langle  \bar{l} \looparrowright (\bar{e}[x'/x] |  \\
& i \rightarrowtail x') , bk[x'/x] \mapsto \bar{l} \rangle \si p[x'/x]  )   )   \}
~  x:=\{bk.e\} ~ \{ p \}
\end{split} \tag{A21}
\end{equation}
where $x'$ is not free in $e$ and $p$.

\section{Soundness of the Frame Rule for BCSSs} \label{sect6}
In this section, we undertake to prove that the Frame Rule for BCSSs is soundness.
The soundness result guarantees that the approach we formalize local reasoning is valid.
Recall the structural rules, we have the syntactic version of the Frame Rule:
\begin{center}
	\begin{tabular}{c}
		$\{p \} - \{ q \} $
		\\
		\hline
		$ \{p*r \} - \{ q*r \}$
	\end{tabular}
	\begin{tabular}{c}
		$ \textrm{Modify}_S (C) \cap \textrm{FV}(r) = \emptyset$
	\end{tabular}
\end{center}
where $ \textrm{Modify}_S (C) $ range over $ \textrm{Modify}_V (C) $, $ \textrm{Modify}_B (C) $, and $ \textrm{Modify}_F (C) $.

Since we treat predicates semantically in this paper, we need to reformulate the Frame Rule to get a semantic version and we need to rewrite the condition in the syntactic version of the Frame Rule.
Let $ \textrm{Modify}_S (C) =X_S $, 
then the condition in the syntactic version of the Frame Rule can be written as:
\begin{equation*}
X_S \cap \textrm{FV}(r) = \emptyset 
\end{equation*}


Now we analyze such condition, review the Sect.\ref{sect4}, the interpretation of an assertion $ r $ is given by $ \{ (s_V,s_B,s_F,h_V,h_B) \mid (s_V,s_B,s_F,h_V,h_B) \in  r \} $.

\begin{enumerate}[labelindent=\parindent,leftmargin=*]
	\item For variables whether location variables, block variables, or file variables, the condition simply checks whether any variable in $ r $ is modified by the command $ C $.
	\item For heap cells which are composed of the heap for variables and heap for blocks, the condition is more elaborate. 
	Consider for the conclusion of the rule, with the definition of ``$ * $'', which says that for every state satisfying $ p*r $ , the current heap can be split into two subheaps so that $ p $ holds for the one and $ r $ for the other. 
	Then for the premise of the rule, the tight interpretation of the Hoare triple $\{p \} C \{ q \} $ says that the command can only access the cells guaranteed to exist by $ p $ which is involved in \textbf{Point1}. 
	So $ r $ is an invariant during the execution which is what the condition says.
\end{enumerate}

Similar to the semantic of $ \forall x.r $, we define:
\begin{eqnarray*}
	&	 \forall X_V.r = \{ (s_V,s_B,s_F,h_V,h_B) \mid \forall {s_V}' \in \textrm{Stores}_V(X_V).\\
	& (s_V[{s_V}'],s_B,s_F,h_V,h_B) \in  r \} ; \\
	&	\forall X_B.r = \{ (s_V,s_B,s_F,h_V,h_B) \mid \forall {s_B}' \in \textrm{Stores}_B(X_B).\\ &(s_V,s_B[{s_B}'],s_F,h_V,h_B) \in  r \} ; \\
	& \forall X_F.r = \{ (s_V,s_B,s_F,h_V,h_B) \mid \forall {s_F}' \in \textrm{Stores}_F(X_F).\\ &(s_V,s_B,s_F[{s_F}'],h_V,h_B) \in  r \}.
\end{eqnarray*}

where $ s_S[{s_S}'] $ denotes the update of $ s_S $ by $ {s_S}' $ defined by:

$$ s_S[{s_S}'](y) \triangleq \left\{
\begin{array}{rll}
&{s_S}'(y)               &  \textrm{if} \ y \  \in \dom({s_S}'),\\
&s_S(y)                  &  \textrm{if} \ y \  \in \dom(s_S),\\
&\textrm{undefined}           &  \textrm{otherwise}.\\
\end{array} \right.  $$

Hence, we have that:
\begin{equation*}
\textrm{Modify}_S (C) \cap \textrm{FV}(r) = \emptyset  \; \; \textrm{holds iff} \; \;  r = \forall X_S. r.
\end{equation*}

The Semantic Version of the Frame Rule:
\begin{center}
	\begin{tabular}{c}
		$\{p \} - \{ q \} $
		\\
		\hline
		$ \{p*r \} - \{ q*r \}$
	\end{tabular}
	\begin{tabular}{c}
		$ r = \forall X_S. r $
	\end{tabular}
\end{center}

where $ X_S $ range over $ X_V $, $ X_B $, and $ X_F $.

\begin{theorem}[Soundness of the Frame Rule]
	The Frame Rule is sound for both partial and total correctness.
\end{theorem}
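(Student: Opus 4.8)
\emph{Proof plan.} I would follow the by-now-standard route for soundness of the Frame Rule, adapted to the two-tier heap and three-sorted store of our model: isolate two structural properties of the operational semantics $\rightsquigarrow$ and then combine them with the side condition $r=\forall X_S.\,r$. Fix $C$ with $\textrm{Modify}_V(C)=X_V$, $\textrm{Modify}_B(C)=X_B$, $\textrm{Modify}_F(C)=X_F$. The first property, \emph{safety monotonicity}, should read: if $\langle C,(s_V,s_B,s_F,h_V,h_B)\rangle$ is safe, $h_V \# h_V^1$ and $h_B \# h_B^1$, then $\langle C,(s_V,s_B,s_F,h_V * h_V^1, h_B * h_B^1)\rangle$ is safe. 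The second, the \emph{frame property}, should read: under the same disjointness hypotheses, any terminating run $\langle C,(s_V,s_B,s_F,h_V * h_V^1,h_B * h_B^1)\rangle \rightsquigarrow^* (s_V',s_B',s_F',k_V,k_B)$ factors as $k_V=m_V * h_V^1$ and $k_B=m_B * h_B^1$ for heaps $m_V,m_B$ produced by some run $\langle C,(s_V,s_B,s_F,h_V,h_B)\rangle \rightsquigarrow^* (s_V',s_B',s_F',m_V,m_B)$ on the small state, so the frame heaps $h_V^1,h_B^1$ are never touched. A companion fact (immediate for atomic commands, by induction for compound ones) is that $C$ alters store entries only for variables in $X_V \cup X_B \cup X_F$, so $s_V',s_B',s_F'$ agree with $s_V,s_B,s_F$ off $X_V,X_B,X_F$ respectively.

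\emph{Establishing the structural properties.} I would prove both by induction on the structure of $C$, with the atomic commands as base cases; the compound cases ($C;C'$, $\mathbf{if}$, $\mathbf{while}$) are routine, using for $\mathbf{while}$ an inner induction on the number of iterations. The heap-neutral atomic commands --- $x:=e$, $b:=bk$, $f.e:=bk$, $f:=\mathbf{create}(bk^*)$, $\mathbf{attach}(f,bk^*)$, $\mathbf{delete}~f$ and the Boolean tests --- are immediate: they may update $s_V$, $s_B$ or $s_F$, but the heaps, and with them any fault-avoidance conditions (such as ``the blocks named in $\mathbf{create}$ exist''), survive heap extension trivially. The single-tier commands on $h_V$ --- $x:=\mathbf{cons}(\bar{e})$, $x:=[e]$, $[e]:=e'$, $\mathbf{dispose}(e)$ --- are handled exactly as in ordinary separation logic, using for $x:=\mathbf{cons}(\bar{e})$ that $\textrm{Loc}$ contains infinitely many fresh consecutive runs, so the big run can allocate away from $\dom(h_V^1)$ and, conversely, the small run can copy the big run's choice. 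The real work is in the block and two-tier commands: $b:=\mathbf{allocate}(\bar{e})$ allocates a fresh block address (avoiding $\dom(h_B)\cup\dom(h_B^1)$) together with a fresh run of content cells (avoiding $\dom(h_V)\cup\dom(h_V^1)$); $\mathbf{append}(bk,e)$ extends $h_B(bk)$ with one fresh $h_V$-cell; for $x:=\{bk.e\}$ and $\mathbf{delete}~b$, safety of the small run forces the relevant block (and its indexed cell) to lie in $h_B$ (and $h_V$), so the big run touches exactly the same data, localizing the effect and leaving $h_V^1,h_B^1$ intact.

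\emph{Assembling soundness.} Suppose $\{p\}C\{q\}$ is valid and $r=\forall X_S.\,r$ with $X_S$ ranging over $X_V,X_B,X_F$. Take $(s_V,s_B,s_F,h_V,h_B)\models p * r$; the semantics of $*$ on global assertions gives splittings $h_V=h_V^p * h_V^r$, $h_B=h_B^p * h_B^r$ with $(s_V,s_B,s_F,h_V^p,h_B^p)\models p$ and $(s_V,s_B,s_F,h_V^r,h_B^r)\models r$. Validity of $\{p\}C\{q\}$ gives safety of $\langle C,(s_V,s_B,s_F,h_V^p,h_B^p)\rangle$, so safety monotonicity makes $\langle C,(s_V,s_B,s_F,h_V,h_B)\rangle$ safe. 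For the postcondition, a terminating run $\langle C,(s_V,s_B,s_F,h_V,h_B)\rangle \rightsquigarrow^* (s_V',s_B',s_F',k_V,k_B)$ factors, by the frame property, through $\langle C,(s_V,s_B,s_F,h_V^p,h_B^p)\rangle \rightsquigarrow^* (s_V',s_B',s_F',m_V,m_B)$ with $k_V=m_V * h_V^r$, $k_B=m_B * h_B^r$, and with $s_V',s_B',s_F'$ agreeing with $s_V,s_B,s_F$ off $X_V,X_B,X_F$. Validity gives $(s_V',s_B',s_F',m_V,m_B)\models q$; since the stores changed only on $X_V\cup X_B\cup X_F$ and $r=\forall X_S.\,r$, we get $(s_V',s_B',s_F',h_V^r,h_B^r)\models r$; recombining along the disjoint unions yields $(s_V',s_B',s_F',k_V,k_B)\models q * r$. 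This settles partial correctness. For total correctness the only extra point is normal termination of $\langle C,(s_V,s_B,s_F,h_V,h_B)\rangle$: safety is already had, and an infinite reduction sequence from it would, by applying the frame property to each finite prefix, project to an infinite reduction sequence from $\langle C,(s_V,s_B,s_F,h_V^p,h_B^p)\rangle$, contradicting the termination guaranteed by validity of $\{p\}C\{q\}$; so the big command must terminate, and the computation above again delivers $q * r$.

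\emph{Main obstacle.} I expect the hard part to be verifying the two structural properties for the genuinely two-tier commands $b:=\mathbf{allocate}(\bar{e})$ and $\mathbf{append}(bk,e)$, which act on $h_B$ and $h_V$ simultaneously: one must carry freshness bookkeeping in \emph{both} tiers at once --- a new block address avoiding $\dom(h_B)\cup\dom(h_B^1)$ while each of its freshly allocated content cells avoids $\dom(h_V)\cup\dom(h_V^1)$ --- and one must check that every reduction step preserves the cross-tier well-formedness invariant baked into states, namely that the location addresses listed in a block's sequence lie in $\dom(h_V)$, on both the small and the big state. A secondary complication is that $f:=\mathbf{create}(bk^*)$, $\mathbf{attach}(f,bk^*)$ and $\mathbf{delete}~f$ modify $s_F$, so $\textrm{Modify}_F(C)$ must appear in the store-agreement statement; this is exactly why the side condition is phrased with $X_S$ ranging over $X_V,X_B,X_F$ rather than over location variables alone.
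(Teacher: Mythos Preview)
Your proposal is correct and follows essentially the same route as the paper: isolate Safety (and Termination) Monotonicity, the Frame Property, and Write Locality, then assemble them with the side condition $r=\forall X_S.\,r$ exactly as you describe. The paper states these three lemmas and cites Yang's thesis rather than proving them; your inductive argument on the structure of $C$ (with the two-tier freshness bookkeeping for $\mathbf{allocate}$ and $\mathbf{append}$) is the natural way to discharge them in this setting, and your inline termination argument is just the proof of the paper's Termination Monotonicity lemma.
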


\begin{proof}
	
	The proof uses the local properties which are from \cite{yang2001local}, where some similar properties are shown.
	
	
	\begin{lemma}[Safety and Termination Monotonicity]
		
		\begin{tabular}{ll}
			$ 1 $ & If $ \langle C,\sigma \rangle $ is safe, and $ h_V \# {h_V}' , h_B \# {h_B}' $, \\
			
			& then $ \langle C,(s_V,s_B,s_F,h_V * {h_V}' ,h_B * {h_B}' ) \rangle $ is safe. \\
			
			$ 2 $  & If $ \langle C,\sigma \rangle $ must terminate, and $ h_V \# {h_V}', h_B \# {h_B}' $, \\
			
			& then $ \langle C,(s_V,s_B,s_F,h_V * {h_V}' ,h_B * {h_B}' ) \rangle $ \\
			& must terminate.
		\end{tabular}	
		
	\end{lemma}
	
	\begin{lemma}[Frame Property]
		
		\begin{tabular}{ll}	
			& Suppose  $ \langle C,(s_V,s_B,s_F,h_V^0,h_B^0) \rangle $ is safe 
			and \\ 
			& \quad $ \langle C,(s_V,s_B,s_F,h_V^0 * h_V^1 ,h_B^0 * h_B^1) \rangle \rightsquigarrow^* $\\
			& $({s_V}',{s_B}',{s_F}',{h_V}',{h_B}')$, then there are $ {h_V^0}' , {h_B^0}' $,\\	
			&  
			 where $ \langle C,(s_V,s_B,s_F,h_V^0,h_B^0) \rangle \rightsquigarrow^* $ $ (s_V,s_B,s_F,{h_V^0}',$  \\
			& 
			${h_B^0}') $ and $ {h_V}' = {h_V^0}' * h_V^1  , {h_B}' = {h_B^0}' * h_B^1 $.
		\end{tabular}
		
	\end{lemma}
	
	\begin{lemma}[Write Locality]  \label{lemma3}
		
		\begin{tabular}{ll}		
			& If $ \langle C,\sigma \rangle \rightsquigarrow^* ({s_V}',{s_B}',{s_F}',{h_V}',{h_B}')$, and\\	
			&  $ x \notin \textrm{Modify}_S (C) $,
			then $ s_S(x) = {s_S}'(x) $.
		\end{tabular}	
		
	\end{lemma}

	We deal with the soundness of the Frame Rule for BCSSs in the following:
	
	(1) For partial correctness:
	
	We will show $ \{p*r \} C \{ q*r \} $ is $ \mathbf{true} $.
	By the interpretation of partial correctness, we have:
	
	\begin{tabular}{ll}
		& $ \{p*r \} C \{ q*r \} $  is  $ \mathbf{true} $  iff for $ \forall (s_V,s_B,s_F,h_V,h_B).$\\
		& if $\sigma \models p*r $ then \\
		& 1.$\langle C,\sigma \rangle $ is safe and \\
		& 2.  if $ \langle C,\sigma \rangle \rightsquigarrow^* ({s_V}',{s_B}',{s_F}',{h_V}',{h_B}')$ \\
		&  then $ ({s_V}',{s_B}',{s_F}',{h_V}',{h_B}') \models q*r $ 
	\end{tabular}
	
	Pick a state $ \sigma $, suppose $\sigma \models p*r $,
	by the definition of  `` $ * $ '', we have:
	
	$\sigma \models p*r $ iff
	
	\hspace*{0.5cm}
	\begin{tabular}[t]{l}
		there exists $h_H^1, \, h_H^2$ with $h_H^1 \# h_H^2$ and $h_H=h_H^1 \!*\! h_H^2 $ \\
		such that $s_V,s_B,s_F,h_V^1,h_B^1 \models p$ and \\
		$s_V,s_B,s_F,h_V^2,h_B^2 \models r$; \\
		where $ h_H^i $ range over $ h_V^i $ and $ h_B^i $ and $ i=1,2 $.
	\end{tabular}
	
	By the premise of the frame rule, we have $ \{p \} C \{ q \} $ is $ \mathbf{true} $,
	
	\begin{tabular}{ll}
		& $ \{p \} C \{ q \} $ is $ \mathbf{true} $  iff for $ \forall (s_V,s_B,s_F,h_V^1,h_B^1).$\\
		& if $(s_V,s_B,s_F,h_V^1,h_B^1) \models p $ then \\
		& 1.$\langle C,(s_V,s_B,s_F,h_V^1,h_B^1) \rangle $ is safe and \\
		& 2.  if $ \langle C,(s_V,s_B,s_F,h_V^1,h_B^1) \rangle \rightsquigarrow^* ({s_V}',{s_B}',{s_F}',$ \\
		&  ${h_V^1}',{h_B^1}')$ then $ ({s_V}',{s_B}',{s_F}',{h_V^1}',{h_B^1}') \models q $ 
	\end{tabular}
	
	By Lemma 1.(1) since $\langle C,(s_V,s_B,s_F,h_V^1,h_B^1)\rangle $ is safe, 
	$\langle C,(s_V,s_B,s_F,$ \\
	$h_V,h_B) \rangle $ is safe, where $h_V=h_V^1 * h_V^2 $ and $h_B=h_B^1 * h_B^2 $.
	
	By Lemma 2, now we have that $\langle C,(s_V,s_B,s_F,h_V^1,h_B^1) \rangle $ is safe, and 
	$ \langle C,(s_V,s_B,s_F,h_V^1 * h_V^2,h_B^1 * h_B^2) \rangle \rightsquigarrow^* ({s_V}',{s_B}',{s_F}',{h_V}',{h_B}')$,
	then there exit $ {h_V^1}' , {h_B^1}' $ such that
	$ \langle C,(s_V,s_B,s_F,h_V^1 ,h_B^1) \rangle \rightsquigarrow^* ({s_V}',{s_B}',{s_F}',{h_V^1}',{h_B^1}')$.
	
	We will show that $ ({s_V}',{s_B}',{s_F}',{h_V}',{h_B}') \models q*r $:
	
	$({s_V}',{s_B}',{s_F}',{h_V}',{h_B}') \models q*r $ iff
	
	\begin{tabular}[t]{l}
		there exist ${h_H^1}', \, h_H^2$ with ${h_H^1}' \# h_H^2$ and $h_H={h_H^1}' * h_H^2 $ \\
		such that $({s_V}',{s_B}',{s_F}',{h_V^1}',{h_B^1}') \models q$ and \\ $({s_V}',{s_B}',{s_F}',h_V^2,h_B^2) \models r$;
	\end{tabular} 
	
	Now we have that $ ({s_V}',{s_B}',{s_F}',{h_V^1}',{h_B^1}') \models q $. By Lemma \ref{lemma3} since $(s_V,s_B,s_F,h_V^2, $ 
	$h_B^2) \models r$ and $ r = \forall X_S. r $,
	we obtain that  $({s_V}',{s_B}',{s_F}',h_V^2,h_B^2) \models r$.
	
	Hence, we find the heaps ${h_H^1}', \, h_H^2$ with ${h_H^1}' \# h_H^2$ and $h_H={h_H^1}' * h_H^2 $ such that
	$ ({s_V}',{s_B}',{s_F}',{h_V^1}',{h_B^1}') \models q $ and $({s_V}',{s_B}',{s_F}',h_V^2,h_B^2) \models r$.
	
	Therefore, $ ({s_V}',{s_B}',{s_F}',{h_V}',{h_B}') \models q*r $,
	Theorem 1 is proved.
	
	(2) The case for total correctness can be handled similarly, the only difference is to use the Termination Monotonicity instead of the Safety Monotonicity. 
	
\end{proof}

\section{An Illustrating Example} \label{sect7}

In this section, we will demonstrate how to use the axioms and rules by a practical example from Disk Balancer (cf.Sect.\ref{sect2}), which proves the correctness of Transfer algorithm as follows.
To make the action of the inference rules clear, each assertion is given by the unabbreviated form.


\begin{algorithm}
	\caption{Transfer}
	\begin{algorithmic}[1]
		\State $b_1 \coloneqq \mathbf{allocate}(1011,1012);$
		\State $f \coloneqq \mathbf{create}(b_1);$	
		\State $b_2 \coloneqq \mathbf{allocate}();$
		\Function {Move}{$~$}
		\State $i \coloneqq 1;$
		\While{$i <= \#b_1$}
		\State $x \coloneqq \{b_1.i\};$	
		\State $\mathbf{append}(b_2,x);$
		\State $i \coloneqq i+1;$
		\EndWhile
		\EndFunction
		\State $f.1 \coloneqq b_2;$
		\State $\mathbf{delete}~b_1;$
	\end{algorithmic}
\end{algorithm}

Here is the full proof.
\begin{scriptsize}
\begin{align*}
& 1 \{ \langle \mathbf{emp}_V, \mathbf{emp}_B \rangle \} &\tag{Given} \\
& \indent b_1 \coloneqq \mathbf{allocate}(1011,1012); \\
&\{ \exists \bar{l}.  \langle \bar{l} \looparrowright (1011,1012), b_1 \mapsto \bar{l} \rangle  \} &\tag{A10} \\
& 2 \{ \exists {b_1}'.  \langle \mathbf{true}_V, b_1 == {b_1}' \land {b_1}' \looparrowright {b_1}' \rangle  \} \tag{1} \\
& 3 \{ \exists {b_1}'.  \langle \mathbf{true}_V, b_1 == {b_1}' \land {b_1}' \looparrowright {b_1}' \rangle  \} \\
& \indent f \coloneqq \mathbf{create}(b_1); \\
& \{ \exists {b_1}'.  \langle \mathbf{true}_V, f = (b_1) \land b_1 == {b_1}' \land {b_1}' \looparrowright {b_1}' \rangle  \} \tag{2, A7} \\
& 4 \{ \exists {b_1}'.  \langle \mathbf{true}_V, f = (b_1) \land b_1 == {b_1}' \land {b_1}' \looparrowright {b_1}' \rangle  \} \\
& \indent b_2 \coloneqq \mathbf{allocate}(); \\
&\{ \exists {b_1}',{b_2}'.  \langle \mathbf{true}_V, f = (b_1) \land b_1 == {b_1}' \land b_2 == {b_2}' \land {b_1}' \looparrowright {b_1}' * {b_2}' \looparrowright \mathbf{nil} \rangle  \} \tag{3, A10}
\end{align*}
\end{scriptsize}

After the executing of while-loop, the content of $b_2$ should be its initial content followed by the content of $b_1$, or formally:
\begin{scriptsize}
\begin{align*}
& \{ \exists {b_1}',{b_2}'. \langle \mathbf{true}_V , b_1 == {b_1}' \land b_2 == {b_2}' \land {b_1}' \looparrowright {b_1}' * {b_2}' \looparrowright \mathbf{nil} \rangle   \}\\
& \indent i \coloneqq 1; \mathbf{while}~be~\mathbf{do}~C';  \\
& \{ \exists {b_1}',{b_2}'. \langle \mathbf{true}_V , b_1 == {b_1}' \land b_2 == {b_2}' \land  \mathbf{true}_B * {b_2}' \looparrowright {b_1}' \rangle \}
\end{align*}
\end{scriptsize}

We are proving the correctness of the Move function step by step. After the first command $i := 1$, a complicated condition is obtained from the initial condition and some unnecessary conditions is omitted in the loop.

\begin{scriptsize}
\begin{align*}
& 5 \{ \exists {b_1}',{b_2}'. \langle 1 \leq \#b_1 + 1 \land \#\mathbf{nil} = 0 \land \mathbf{true}_V , b_1 == (\mathbf{nil},\mathbf{nil}) \circledast {b_1}' \land \\
& b_2 == {b_2}' \circledast (\mathbf{nil},\mathbf{nil}) \land {b_1}' \looparrowright {b_1}' * {b_2}' \looparrowright \mathbf{nil} \rangle   \} \tag{4 } \\
& 6 \{ \exists b_3,b_4,b_5. \langle 1 \leq \#b_1 + 1 \land \#b_3 = 0 \land \#b_4 = \#b_1 \land \mathbf{true}_V , b_1 ==  \\
& b_3 \circledast b_4 \land b_2 == b_5   \land \mathbf{true}_B * b_4 \looparrowright b_4 * b_5 \looparrowright b_3 \rangle \} \tag{5 } \\
& 7 \{ \exists b_3,b_4,b_5. \langle 1 \leq \#b_1 + 1 \land \#b_3 = 0 \land \#b_4 = \#b_1 \land \mathbf{true}_V , b_1 ==  \\
& b_3 \circledast b_4 \land b_2 == b_5   \land \mathbf{true}_B * b_4 \looparrowright b_4 * b_5 \looparrowright b_3 \rangle \} \\
& \indent i \coloneqq 1; \\
& \{ \exists b_3,b_4,b_5. \langle i \leq \#b_1 + 1 \land \#b_3 = i-1 \land \#b_4 = \#b_1 - i + 1 \land \mathbf{true}_V , \\  
& b_1 == b_3 \circledast b_4 \land b_2 == b_5   \land \mathbf{true}_B * b_4 \looparrowright b_4 * b_5 \looparrowright b_3 \rangle \} \tag{6, A2} \\
\end{align*}
\end{scriptsize}

The next command is a while-loop. To use the rule of while-loops (R3), we need to find the loop invariant first.

\begin{scriptsize}
\begin{align*}
\textrm{Let } A & \equiv \exists b_3,b_4,b_5. \langle i \leq \#b_1 + 1 \land \#b_3 = i-1 \land \#b_4 = \#b_1-i+1 \land \\
&  \mathbf{true}_V , b_1 == b_3 \circledast b_4  \land b_2 == b_5 \land \mathbf{true}_B * b_4 \looparrowright b_4 * b_5 \looparrowright b_3 \rangle, \\
be & \equiv i \leq \#b_1, \\
C' & \equiv x \coloneqq \{b_1.i\}; \mathbf{append}(b_2,x); i \coloneqq i+1;
\end{align*}
\end{scriptsize}

Here $b_4$ means the block content copied, $b_5$ means the rest block content. We will prove $A$ is the loop invariant in the following, i.e. $\{A \land T(be)\} ~C'~ \{A\}$.

\begin{scriptsize}
\begin{align*}
& 8 A \land T(be) = \{ \exists b_3,b_4,b_5. \langle i \leq \#b_1  \land \#b_3 = i-1 \land \#b_4 = \#b_1-i+1  \\
&  \land \mathbf{true}_V, b_1 == b_3 \circledast b_4  \land b_2 == b_5 \land \mathbf{true}_B *  b_4 \looparrowright b_4 * b_5 \looparrowright b_3 \rangle \} \\
& 9 \{ \exists b_3,b_5,b_6,b_7,l,y. \langle i \leq \#b_1  \land \#b_3 = i-1 \land \#b_7 = \#b_1-i+1-1 \land  \\
& \mathbf{true}_V * l \mapsto y, b_1 == b_3 \circledast b_6 \circledast b_7 \land b_2 == b_5 \land \mathbf{true}_B * b_6 \mapsto (l) * \\
& b_7 \looparrowright b_7 * b_5 \looparrowright b_3 \rangle \} \tag{8} \\
& 10 \{ \exists b_3,b_5,b_6,b_7,l,y. \langle i \leq \#b_1  \land \#b_3 = i-1 \land \#b_7 = \#b_1-i+1-1   \\
& \land \mathbf{true}_V * l \mapsto y, b_1 == b_3 \circledast b_6 \circledast b_7 \land b_2 == b_5 \land \mathbf{true}_B * b_6 \mapsto (l) *  \\
& b_7 \looparrowright b_7 * b_5 \looparrowright b_3 \rangle \} \\
& \indent x \coloneqq \{b_1.i\}; \\
& \{ \exists b_3,b_5,b_6,b_7,l,y. \langle i \leq \#b_1  \land \#b_3 = i-1 \land \#b_7 = \#b_1-i+1-1 \land  \\
& x=y \land \mathbf{true}_V * l \mapsto y, b_1 == b_3 \circledast b_6 \circledast b_7 \land b_2 == b_5 \land \mathbf{true}_B * b_6 \mapsto   \\
&  (l) * b_7 \looparrowright b_7 * b_5 \looparrowright b_3 \rangle \} \tag{9, A12}\\
& 11 \{ \exists b_3,b_5,b_6,b_7,l,y. \langle i \leq \#b_1  \land \#b_3 = i-1 \land \#b_7 = \#b_1-i+1-1   \\
& \land x=y \land \mathbf{true}_V * l \mapsto y, b_1 == b_3 \circledast b_6 \circledast b_7 \land b_2 == b_5 \land \mathbf{true}_B *  b_6   \\
& \mapsto (l) * b_7 \looparrowright b_7 * b_5 \looparrowright b_3 \rangle \} \\
& \indent \mathbf{append}(b_2,x); \\
& \{\exists b_3,b_5,b_6,b_7,b_8,l,l',y. \langle i \leq \#b_1  \land \#b_3 = i-1 \land \#b_7 = \#b_1-i+1   \\
& -1 \land x=y \land \mathbf{true}_V * l \mapsto y * l' \mapsto y , b_1 == b_3 \circledast b_6 \circledast b_7 \land b_2 == b_5   \\
& \circledast b_8 \land \mathbf{true}_B *  b_6 \mapsto (l) * b_7 \looparrowright b_7 * b_5 \looparrowright b_3 * b_8 \mapsto (l') \rangle \} \tag{10, A11} \\
& 12 \{ \exists b_3,b_6,b_7,b_9,l,l',y. \langle i \leq \#b_1  \land \#b_3 = i-1 \land \#b_7 = \#b_1-i+1 \\
& -1 \land  \mathbf{true}_V * l \mapsto y * l' \mapsto y ,  b_1 == b_3 \circledast b_6 \circledast b_7 \land b_2 == b_9  \land \mathbf{true}_B  \\
&  * b_3 \looparrowright b_3 * b_6 \mapsto (l) * b_7 \looparrowright b_7 * b_9 \mapsto (l') \rangle\} \tag{11} \\
& 13 \{ \exists b_7,b_9,b_{10},l,l',y. \langle i \leq \#b_1  \land \#b_{10} = i-1+1 \land \#b_7 = \#b_1-i+  \\
&  1-1 \land \mathbf{true}_V * l \mapsto y * l' \mapsto y , b_1 == b_{10} \circledast b_7  \land b_2 == b_9  \land \mathbf{true}_B *  \\
& b_{10} \mapsto (l) * b_7 \looparrowright b_7 * b_9 \mapsto (l') \rangle \} \tag{12} \\
& 14 \{ \exists b_7,b_9,b_{10}. \langle i \leq \#b_1  \land \#b_{10} = i-1+1 \land \#b_7 = \#b_1-i+1-1 \land \\
&  \mathbf{true}_V, b_1 == b_{10} \circledast b_7  \land b_2 == b_9  \land \mathbf{true}_B  * b_7 \looparrowright b_7 * b_9 \looparrowright b_{10} \rangle \} \tag{13} \\
& 15 \{ \exists b_7,b_9,b_{10}. \langle i \leq \#b_1  \land \#b_{10} = i-1+1 \land \#b_7 = \#b_1-i+1-1 \\
& \land  \mathbf{true}_V, b_1 == b_{10} \circledast b_7  \land b_2 == b_9  \land \mathbf{true}_B *  b_7 \looparrowright b_7 * b_9 \looparrowright b_{10} \rangle\} \\
& \indent i \coloneqq i+1; \\
& \{ \exists b_7,b_9,b_{10}. \langle i \leq \#b_1 +1  \land \#b_{10} = i-1 \land \#b_7 = \#b_1-i+1 \land  \mathbf{true}_V  ,\\
&   b_1 == b_{10} \circledast b_7  \land b_2 == b_9  \land \mathbf{true}_B *  b_7 \looparrowright b_7 * b_9 \looparrowright b_{10} \rangle\} \tag{14, A2} \\
& 16 \{ \exists b_3,b_4,b_5. \langle i \leq \#b_1 + 1 \land \#b_3 = i-1 \land \#b_4 = \#b_1-i+1 \land \mathbf{true}_V ,\\
&  b_1 == b_3 \circledast b_4  \land b_2 == b_5 \land \mathbf{true}_B *  b_4 \looparrowright b_4 * b_5 \looparrowright b_3 \rangle\} \tag{15}
\end{align*}
\end{scriptsize}

Notice that, $\{ \exists b_3,b_4,b_5. \langle i \leq \#b_1 + 1 \land \#b_3 = i-1 \land \#b_4 = \#b_1-i+1 \land \mathbf{true}_V ,
b_1 == b_3 \circledast b_4  \land b_2 == b_5 \land \mathbf{true}_B *  b_4 \looparrowright b_4 * b_5 \looparrowright b_3 \rangle\}$ in 16 is $A$ itself. So we prove that $\{A \land T(be)\} ~C'~ \{A\}$. By the rule of while-loops (R3), $\{A \} ~ \mathbf{while}~be~\mathbf{do}~C' ~ \{A \land \lnot T(be)\}$ is acquired.

\begin{scriptsize}
\begin{align*}
& 17 A \land \lnot T(be) =  \exists b_3,b_4,b_5. \langle i = \#b_1 + 1 \land \#b_3 = i-1 \land \#b_4 = \#b_1- \\
& i+1 \land \mathbf{true}_V , b_1 == b_3 \circledast b_4  \land b_2 == b_5 \land \mathbf{true}_B *  b_4 \looparrowright b_4 * b_5 \looparrowright b_3 \rangle \\
& 18 \{ \exists b_3,b_4,b_5. \langle i \leq \#b_1 + 1 \land \#b_3 = i-1 \land \#b_4 = \#b_1-i+1 \land \mathbf{true}_V,\\
&  b_1 == b_3 \circledast b_4  \land b_2 == b_5 \land \mathbf{true}_B *  b_4 \looparrowright b_4 * b_5 \looparrowright b_3 \rangle\} \\
& \indent \mathbf{while}~be~\mathbf{do}~C'; \\
&\{ \exists b_3,b_4,b_5. \langle i = \#b_1 + 1 \land \#b_3 = i-1 \land \#b_4 = \#b_1-i+1 \land \mathbf{true}_V ,\\
&  b_1 == b_3 \circledast b_4  \land b_2 == b_5 \land \mathbf{true}_B *  b_4 \looparrowright b_4 * b_5 \looparrowright b_3 \rangle\} \tag{8-16, R3}
\end{align*}
\end{scriptsize}

From the while-loop, we know $i = \#b_1 + 1$ when it finished. The following conclusions is obtained from 18 and the correctness of the Move function is proven.

\begin{scriptsize}
\begin{align*}
& 19 \{ \exists b_3,b_4,b_5. \langle i = \#b_1 + 1 \land \#b_3 = \#b_1 \land \#b_4 = 0 \land \mathbf{true}_V , b_1 == b_3 \\
& \circledast b_4 \land b_2 == b_5 \land \mathbf{true}_B *  b_4 \looparrowright b_4 * b_5 \looparrowright b_3 \rangle\} \tag{18} \\
& 20 \{ \exists b_3,b_5. \langle i = \#b_1 + 1 \land \#b_3 = \#b_1 \land  \mathbf{true}_V, b_1 == b_3  \land b_2 == b_5 \land \\
&  \mathbf{true}_B * b_5 \looparrowright b_3 \rangle\} \tag{19} \\
& 21 \{ \exists {b_1}',{b_2}'. \langle   \mathbf{true}_V , b_1 == {b_1}'  \land b_2 == {b_2}' \land \mathbf{true}_B  * {b_2}' \looparrowright {b_1}' \rangle\} \tag{20} \\
\end{align*}
\end{scriptsize}

After the Move function, using the axioms A15 and A17, the final result can be obtained.

\begin{scriptsize}
\begin{align*}
& 22 \{ \exists {b_1}',{b_2}'. \langle   \mathbf{true}_V , f = (b_1) \land b_1 == {b_1}'  \land  b_2 == {b_2}' \land \mathbf{true}_B  * {b_2}' \looparrowright \\
&  {b_1}' \rangle\} \tag{4, 21} \\
& 23 \{ \exists {b_1}',{b_2}'. \langle   \mathbf{true}_V , f = (b_1) \land b_1 == {b_1}'  \land  b_2 == {b_2}' \land \mathbf{true}_B  * {b_2}' \looparrowright \\
&  {b_1}' \rangle\} \\
& \indent f.1 \coloneqq b_2; \\
& \{ \exists {b_1}',{b_2}'. \langle   \mathbf{true}_V , f = (b_2) \land b_1 == {b_1}'  \land  b_2 == {b_2}' \land \mathbf{true}_B  * {b_2}' \looparrowright {b_1}' \\
&  \rangle\} \tag{22, A15} \\
& 24 \{ \exists \bar{l},\bar{l'}. \langle  \bar{l} \looparrowright (1011,1012) * \bar{l'} \looparrowright (1011,1012),  f = (b_2) \land    b_1 \mapsto \bar{l}  * b_2 \mapsto \\
& \bar{l'} \rangle\} \tag{1, 23} \\
& 25 \{ \exists \bar{l},\bar{l'}. \langle  \bar{l} \looparrowright (1011,1012) * \bar{l'} \looparrowright (1011,1012),  f = (b_2) \land    b_1 \mapsto \bar{l}  * b_2 \mapsto \\
& \bar{l'} \rangle\} \\
& \indent \mathbf{delete}~b_1; \\
&\{ \exists \bar{l},\bar{l'}. \langle  \bar{l} \looparrowright (1011,1012) * \bar{l'} \looparrowright (1011,1012),  f = (b_2) \land  b_2 \mapsto \bar{l'} \rangle\} \tag{24, A17}
\end{align*}
\end{scriptsize}

With all 1-25 and the rule of composition (R1), the correctness of Transfer Algorithm is proven.

\section{Conclusion} \label{sect8}

To insure reliability of block operations in BCSSs and reduce the complexity of the block-based storage
structure and at the same time refine the block content, a verification framework based on SL is introduced to prove the correctness of BCSSs management programs.
The framework is constructed by introducing a novel two-tier heap structure, and a defined modeling language.
Moreover, assertions based on SL is constructed to describe the properties of BCSSs.
Then a proof system with Hoare-style specifications is proposed to reason about the BCSSs.
Using these methods, an example of practical algorithm with while-loop is verified.
The results show that the proving process is scientific and correct.

Future work will focus on the following directions: (1) Consider more characteristics of BCSSs, such as parallelism, (key,value) pairs and
the relations between blocks and locations. (2) Investigate the expressiveness, decidability and model checking algorithms of assertions.
(3) Find out highly efficient proof strategies by selecting adaptive bi-abduction rules to improve the usability of our proof system.


%

\ifCLASSOPTIONcompsoc
  \section*{Acknowledgments}
\else
  \section*{Acknowledgment}
\fi

This work is supported by the National Key R\&D Program of China under Grants 2017YFB1103602 and 2018YFB1003904,
the National Natural Science Foundation of China under Grants 61572003, 61772035, 61751210, 61972005 and 61932001, and the
Major State Research Development Program of China under Grant 2016QY04W0804.

\ifCLASSOPTIONcaptionsoff
  \newpage
\fi

\newpage

\appendices
\section{A Semantic Example}
We will use our modeling language to write a program that tries to create and copy a file. First, we create a file $f_1$ with a sole block $b_1$, which value is $(1011,1012)$. Then we create a new empty block $b_2$, which later is appended with content of file $b_1$ one by one in the while-loop. Finally, we create a new empty file $f_2$, and attach the block $b_2$ to it. When it is finished, we got a file $f_1$ and its copy $f_2$. Note that we cannot change the order of command $\mathbf{L}_2$ and $\mathbf{L}_2$ since that the semantics of $f \coloneqq \mathbf{create}(bk_1,...,bk_n)$  request that the term of sequence determined by $h_B(\llbracket bk_i \rrbracket \sigma)$ belong to $\textrm{dom}(h_V)$.

\begin{algorithm}
	\caption{Copy a File}
	\begin{algorithmic}[1]
		\Function {Copy}{$~$}
		\State $b_1 \coloneqq \mathbf{allocate}(1011,1012);$
		\State $f_1 \coloneqq \mathbf{create}(b_1);$	
		\State $b_2 \coloneqq \mathbf{allocate}();$
		\State $i \coloneqq 1;$
		\While{$i <= \#b_1$}
		\State $x \coloneqq \{b_1.i\};$	
		\State $\mathbf{append}(b_2,x)$
		\State $i \coloneqq i+1;$
		\EndWhile
		\State $f_2 \coloneqq \mathbf{create}();$
		\State $\mathbf{attach}(f_2,b_2);$
		\EndFunction
	\end{algorithmic}
\end{algorithm}

We try to use the denotational semantics to analyze the sample program Algorithm 1.
Assume the initial state is $(s_F,s_B,s_V,h_B,h_V)$, after the execution, we can get a final state.

For convenience, we use label $\mathbf{L}_n$ to express the command in line $n$ of the Algorithm 1,
as an example, $\mathbf{L}_3$ is the command $b_1 \coloneqq \mathbf{allocate}(1011,1012);$.
Also, we use $\mathbf{W}$ to express the while loop part of the program. So there is:

\begin{scriptsize}
\begin{align*}
& ~ \llbracket \mathbf{L_2};\mathbf{L_3};\mathbf{L_4};\mathbf{L_5};\mathbf{W};\mathbf{L_{11}};\mathbf{L_{12}} \rrbracket(s_F,s_B,s_V,h_B,h_V)\\
& = \llbracket \mathbf{L_{12}} \rrbracket(\llbracket \mathbf{L_{11}} \rrbracket(\llbracket \mathbf{W} \rrbracket(\llbracket \mathbf{L_5} \rrbracket(\llbracket \mathbf{L_4} \rrbracket(\llbracket \mathbf{L_3} \rrbracket(\llbracket \mathbf{L_2} \rrbracket(s_F,s_B,s_V,h_B,h_V)))))))\\
& = \llbracket \mathbf{L_{12}} \rrbracket(\llbracket \mathbf{L_{11}} \rrbracket(\llbracket \mathbf{W} \rrbracket(\llbracket \mathbf{L_5} \rrbracket(\llbracket \mathbf{L_4} \rrbracket(\llbracket \mathbf{L_3} \rrbracket(\llbracket b_1 \coloneqq \mathbf{allocate}(1011,1012) \rrbracket(s_F, \\
& s_B,s_V,h_B,h_V)))))))\\
& = \llbracket \mathbf{L_{12}} \rrbracket(\llbracket \mathbf{L_{11}} \rrbracket(\llbracket \mathbf{W} \rrbracket(\llbracket \mathbf{L_5} \rrbracket(\llbracket \mathbf{L_4} \rrbracket(\llbracket \mathbf{L_3} \rrbracket(s_F,s_B[bloc_1/b_1],s_V,h_B[(loc_{11}, \\
& loc_{12})/bloc_1],[h_V|loc_{11}:1011,loc_{12}:1012]))))))\\
& = \llbracket \mathbf{L_{12}} \rrbracket(\llbracket \mathbf{L_{11}} \rrbracket(\llbracket \mathbf{W} \rrbracket(\llbracket \mathbf{L_5} \rrbracket(\llbracket \mathbf{L_4} \rrbracket(\llbracket f_1 \coloneqq \mathbf{create}(b_1) \rrbracket(s_F,s_B[bloc_1/b_1],s_V, \\
& h_B[(loc_{11},loc_{12})/bloc_1],[h_V|loc_{11}:1011,loc_{12}:1012]))))))\\
& = \llbracket \mathbf{L_{12}} \rrbracket(\llbracket \mathbf{L_{11}} \rrbracket(\llbracket \mathbf{W} \rrbracket(\llbracket \mathbf{L_5} \rrbracket(\llbracket \mathbf{L_4} \rrbracket(s_F[(bloc_1)/f_1],s_B[bloc_1/b_1],s_V,h_B \\
& [(loc_{11},loc_{12})/bloc_1],[h_V|loc_{11}:1011,loc_{12}:1012])))))\\
& = \llbracket \mathbf{L_{12}} \rrbracket(\llbracket \mathbf{L_{11}} \rrbracket(\llbracket \mathbf{W} \rrbracket(\llbracket \mathbf{L_5} \rrbracket(\llbracket b_2 \coloneqq \mathbf{allocate}() \rrbracket(s_F[(bloc_1)/f_1],s_B[bloc_1/ \\
& b_1],s_V,h_B[(loc_{11},loc_{12})/bloc_1],[h_V|loc_{11}:1011,loc_{12}:1012])))))\\
& = \llbracket \mathbf{L_{12}} \rrbracket(\llbracket \mathbf{L_{11}} \rrbracket(\llbracket \mathbf{W} \rrbracket(\llbracket \mathbf{L_5} \rrbracket(s_F[(bloc_1)/f_1],s_B[bloc_1/b_1][bloc_2/b_2],s_V, \\
& h_B[(loc_{11},loc_{12})/bloc_1][\mathbf{nil}/bloc_2],[h_V|loc_{11}:1011,loc_{12}:1012]))))\\
& = \llbracket \mathbf{L_{12}} \rrbracket(\llbracket \mathbf{L_{11}} \rrbracket(\llbracket \mathbf{W} \rrbracket(\llbracket i \coloneqq 1 \rrbracket(s_F[(bloc_1)/f_1],s_B[bloc_1/b_1][bloc_2/b_2],s_V, \\
& h_B[(loc_{11},loc_{12})/bloc_1][\mathbf{nil}/bloc_2],[h_V|loc_{11}:1011,loc_{12}:1012]))))\\
& = \llbracket \mathbf{L_{12}} \rrbracket(\llbracket \mathbf{L_{11}} \rrbracket(\llbracket \mathbf{W} \rrbracket(s_F[(bloc_1)/f_1],s_B[bloc_1/b_1][bloc_2/b_2],s_V[1/i],h_B \\
& [(loc_{11},loc_{12})/bloc_1][\mathbf{nil}/bloc_2],[h_V|loc_{11}:1011,loc_{12}:1012])))\\
\end{align*}
\end{scriptsize}

Now we have got an intermediate state after executing commands L2, L3, L4 and L5.
Next, we will analyze the while-loop codes.

For

\begin{scriptsize}
\begin{align*}
& ~ \llbracket i <= \#b_1 \rrbracket(s_F[(bloc_1)/f_1],s_B[bloc_1/b_1][bloc_2/b_2],s_V[1/i],h_B[(loc_{11},\\
& ~ loc_{12})/bloc_1][\mathbf{nil}/bloc_2],[h_V|loc_{11}:1011,loc_{12}:1012])\\
& = \llbracket 1 <= 2 \rrbracket(s_F[(bloc_1)/f_1],s_B[bloc_1/b_1][bloc_2/b_2],s_V[1/i],h_B[(loc_{11},\\
& ~ loc_{12})/bloc_1][\mathbf{nil}/bloc_2],[h_V|loc_{11}:1011,loc_{12}:1012])\\
& = \mathbf{true}
\end{align*}
\end{scriptsize}

Thus

\begin{scriptsize}
\begin{align*}
& ~ \llbracket \mathbf{L_{12}} \rrbracket(\llbracket \mathbf{L_{11}} \rrbracket(\llbracket \mathbf{W} \rrbracket(s_F[(bloc_1)/f_1],s_B[bloc_1/b_1][bloc_2/b_2],s_V[1/i],h_B \\
& ~ [(loc_{11},loc_{12})
/bloc_1][\mathbf{nil}/bloc_2],[h_V|loc_{11}:1011,loc_{12}:1012])))\\
& = \llbracket \mathbf{L_{12}} \rrbracket(\llbracket \mathbf{L_{11}} \rrbracket(\llbracket \mathbf{W} \rrbracket( \llbracket \mathbf{L_9} \rrbracket ( \llbracket \mathbf{L_8} \rrbracket ( \llbracket \mathbf{L_7} \rrbracket (s_F[(bloc_1)/f_1],s_B[bloc_1/b_1] \\
&  [bloc_2/b_2],s_V[1/i],h_B[(loc_{11},loc_{12})
/bloc_1][\mathbf{nil}/bloc_2],[h_V|loc_{11}:1011,\\
& loc_{12}:1012]))))))\\
& = \llbracket \mathbf{L_{12}} \rrbracket(\llbracket \mathbf{L_{11}} \rrbracket(\llbracket \mathbf{W} \rrbracket( \llbracket \mathbf{L_9} \rrbracket ( \llbracket \mathbf{L_8} \rrbracket ( \llbracket x \coloneqq \{b_1.i\} \rrbracket (s_F[(bloc_1)/f_1],s_B[bloc_1 \\
& /b_1][bloc_2/b_2],s_V [1/i],h_B[(loc_{11},loc_{12})
/bloc_1][\mathbf{nil}/bloc_2],[h_V|loc_{11}:\\
& 1011,loc_{12}:1012]))))))\\
& = \llbracket \mathbf{L_{12}} \rrbracket(\llbracket \mathbf{L_{11}} \rrbracket(\llbracket \mathbf{W} \rrbracket( \llbracket \mathbf{L_9} \rrbracket ( \llbracket \mathbf{L_8} \rrbracket  (s_F[(bloc_1)/f_1],s_B[bloc_1/b_1][bloc_2/b_2], \\
& s_V[1/i][1011/x], h_B[(loc_{11},loc_{12})
/bloc_1][\mathbf{nil}/bloc_2],[h_V|loc_{11}:1011,\\
& loc_{12}:1012])))))\\
& = \llbracket \mathbf{L_{12}} \rrbracket(\llbracket \mathbf{L_{11}} \rrbracket(\llbracket \mathbf{W} \rrbracket( \llbracket \mathbf{L_9} \rrbracket ( \llbracket \mathbf{append}(b_2,x) \rrbracket  (s_F[(bloc_1)/f_1],s_B[bloc_1/ \\
& b_1][bloc_2/b_2],s_V [1/i][1011/x],h_B[(loc_{11},loc_{12})
/bloc_1][\mathbf{nil}/bloc_2],[h_V|\\
& loc_{11}:1011,loc_{12}:1012])))))\\
& = \llbracket \mathbf{L_{12}} \rrbracket(\llbracket \mathbf{L_{11}} \rrbracket(\llbracket \mathbf{W} \rrbracket( \llbracket \mathbf{L_9} \rrbracket   (s_F[(bloc_1)/f_1],s_B[bloc_1/b_1][bloc_2/b_2],s_V \\
& [1/i][1011/x],h_B [(loc_{11},loc_{12})
/bloc_1][(loc_{21})/bloc_2],[h_V|loc_{11}:1011,\\
& loc_{12}:1012,loc_{21}:1011]))))\\
& = \llbracket \mathbf{L_{12}} \rrbracket(\llbracket \mathbf{L_{11}} \rrbracket(\llbracket \mathbf{W} \rrbracket( \llbracket i \coloneqq i+1 \rrbracket   (s_F[(bloc_1)/f_1],s_B[bloc_1/b_1][bloc_2/b_2], \\
& s_V[1/i][1011/x], h_B[(loc_{11},loc_{12})
/bloc_1][(loc_{21})/bloc_2],[h_V|loc_{11}:1011,\\
& loc_{12}:1012,loc_{21}:1011]))))\\
& = \llbracket \mathbf{L_{12}} \rrbracket(\llbracket \mathbf{L_{11}} \rrbracket(\llbracket \mathbf{W} \rrbracket (s_F[(bloc_1)/f_1],s_B[bloc_1/b_1][bloc_2/b_2],s_V[2/i][1011 \\
& /x],h_B [(loc_{11},loc_{12})
/bloc_1][(loc_{21})/bloc_2],[h_V|loc_{11}:1011,loc_{12}:1012,\\
& loc_{21}:1011])))\\
\end{align*}
\end{scriptsize}

After the first loop of the while-loop command, block $b_2$ has copied the first content of $b_1$.
And the index variable $i$ equals 2. Therefore, the second loop will be started.

For

\begin{scriptsize}
\begin{align*}
& ~ \llbracket i <= \#b_1 \rrbracket(s_F[(bloc_1)/f_1],s_B[bloc_1/b_1][bloc_2/b_2],s_V[2/i][1011/x],h_B[
\\
& (loc_{11},loc_{12})/bloc_1][(loc_{21})/bloc_2],[h_V|loc_{11}:1011,loc_{12}:1012,loc_{21}:\\
& 1011])\\
& = \llbracket 2 <= 2 \rrbracket(s_F[(bloc_1)/f_1],s_B[bloc_1/b_1][bloc_2/b_2],s_V[2/i][1011/x],h_B[
\\
& (loc_{11},loc_{12})/bloc_1][(loc_{21})/bloc_2],[h_V|loc_{11}:1011,loc_{12}:1012,loc_{21}:\\
& 1011])\\
& = \mathbf{true}
\end{align*}
\end{scriptsize}

Thus

\begin{scriptsize}
\begin{align*}
& ~ \llbracket \mathbf{L_{12}} \rrbracket(\llbracket \mathbf{L_{11}} \rrbracket(\llbracket \mathbf{W} \rrbracket (s_F[(bloc_1)/f_1],s_B[bloc_1/b_1][bloc_2/b_2],s_V[2/i][1011 \\
&  /x],h_B[(loc_{11},loc_{12})
/bloc_1][(loc_{21})/bloc_2],[h_V|loc_{11}:1011,loc_{12}:1012,\\
& loc_{21}:1011])))\\
& = \llbracket \mathbf{L_{12}} \rrbracket(\llbracket \mathbf{L_{11}} \rrbracket(\llbracket \mathbf{W} \rrbracket( \llbracket \mathbf{L_9} \rrbracket ( \llbracket \mathbf{L_8} \rrbracket ( \llbracket \mathbf{L_7} \rrbracket (s_F[(bloc_1)/f_1],s_B[bloc_1/b_1][bloc_2 \\
& /b_2],s_V[2/i][1011/x], h_B[(loc_{11},loc_{12})
/bloc_1][(loc_{21})/bloc_2],[h_V|loc_{11}:\\
& 1011,loc_{12}:1012,loc_{21}:1011]))))))\\
& = \llbracket \mathbf{L_{12}} \rrbracket(\llbracket \mathbf{L_{11}} \rrbracket(\llbracket \mathbf{W} \rrbracket( \llbracket \mathbf{L_9} \rrbracket ( \llbracket \mathbf{L_8} \rrbracket ( \llbracket x \coloneqq \{b_1.i\} \rrbracket (s_F[(bloc_1)/f_1],s_B[bloc_1/ \\
& b_1][bloc_2/b_2],s_V[2/i] [1011/x],h_B[(loc_{11},loc_{12})
/bloc_1][(loc_{21})/bloc_2],[h_V\\
& |loc_{11}:1011,loc_{12}:1012,loc_{21}:1011]))))))\\
& = \llbracket \mathbf{L_{12}} \rrbracket(\llbracket \mathbf{L_{11}} \rrbracket(\llbracket \mathbf{W} \rrbracket( \llbracket \mathbf{L_9} \rrbracket ( \llbracket \mathbf{L_8} \rrbracket (s_F[(bloc_1)/f_1],s_B[bloc_1/b_1][bloc_2/b_2], \\
& s_V[2/i][1012/x],h_B [(loc_{11},loc_{12})
/bloc_1][(loc_{21})/bloc_2],[h_V|loc_{11}:1011,\\
& loc_{12}:1012,loc_{21}:1011])))))\\
\end{align*}
\begin{align*}
& = \llbracket \mathbf{L_{12}} \rrbracket(\llbracket \mathbf{L_{11}} \rrbracket(\llbracket \mathbf{W} \rrbracket( \llbracket \mathbf{L_9} \rrbracket ( \llbracket \mathbf{append}(b_2,x) \rrbracket (s_F[(bloc_1)/f_1],s_B[bloc_1/b_1] \\
& [bloc_2/b_2],s_V[2/i] [1012/x],h_B[(loc_{11},loc_{12})
/bloc_1][(loc_{21})/bloc_2],[h_V|\\
& loc_{11}:1011,loc_{12}:1012,loc_{21}:1011])))))\\
& = \llbracket \mathbf{L_{12}} \rrbracket(\llbracket \mathbf{L_{11}} \rrbracket(\llbracket \mathbf{W} \rrbracket( \llbracket \mathbf{L_9} \rrbracket  (s_F[(bloc_1)/f_1],s_B[bloc_1/b_1][bloc_2/b_2],s_V[2/ \\
& i][1012/x],h_B[(loc_{11}, loc_{12})
/bloc_1][(loc_{21},loc_{22})/bloc_2],[h_V|loc_{11}:1011,\\
& loc_{12}:1012,loc_{21}:1011,loc_{22}:1012]))))\\
& = \llbracket \mathbf{L_{12}} \rrbracket(\llbracket \mathbf{L_{11}} \rrbracket(\llbracket \mathbf{W} \rrbracket( \llbracket i \coloneqq i+1 \rrbracket  (s_F[(bloc_1)/f_1],s_B[bloc_1/b_1][bloc_2/b_2], \\
& s_V[2/i][1012/x],h_B[(loc_{11}, loc_{12})
/bloc_1][(loc_{21},loc_{22})/bloc_2],[h_V|loc_{11}:\\
& 1011,loc_{12}:1012,loc_{21}:1011,loc_{22}:1012]))))\\
& = \llbracket \mathbf{L_{12}} \rrbracket(\llbracket \mathbf{L_{11}} \rrbracket(\llbracket \mathbf{W} \rrbracket (s_F[(bloc_1)/f_1],s_B[bloc_1/b_1][bloc_2/b_2],s_V[3/i][1012 \\
& /x],h_B[(loc_{11},loc_{12})/bloc_1][(loc_{21},(loc_{22})/bloc_2],[h_V|loc_{11}:1011,loc_{12}:\\
& 1012,loc_{21}:1011,loc_{22}:1012])))\\
\end{align*}
\end{scriptsize}

After the second loop, block $b_2$ has copied the other content of $b_1$.Now the index variable $i$ equals 3, which will terminate the while-loop command.

For

\begin{scriptsize}
\begin{align*}
& ~ \llbracket i <= \#b_1 \rrbracket(s_F[(bloc_1)/f_1],s_B[bloc_1/b_1][bloc_2/b_2],s_V[3/i][1012/x],h_B
\\
& [(loc_{11},loc_{12})/bloc_1][(loc_{21},loc_{22})/bloc_2],[h_V|loc_{11}:1011,loc_{12}:1012,\\
& loc_{21}:1011,loc_{22}:1012])\\
& = \llbracket 3 <= 2 \rrbracket(s_F[(bloc_1)/f_1],s_B[bloc_1/b_1][bloc_2/b_2],s_V[3/i][1012/x],h_B
\\
& [(loc_{11},loc_{12})/bloc_1][(loc_{21},loc_{22})/bloc_2],[h_V|loc_{11}:1011,loc_{12}:1012,\\
& loc_{21}:1011,loc_{22}:1012])\\
& = \mathbf{false}
\end{align*}
\end{scriptsize}

Thus

\begin{scriptsize}
\begin{align*}
& ~ \llbracket \mathbf{L_{12}} \rrbracket(\llbracket \mathbf{L_{11}} \rrbracket(\llbracket \mathbf{W} \rrbracket (s_F[(bloc_1)/f_1],s_B[bloc_1/b_1][bloc_2/b_2],s_V[3/i][1012/ \\
&  
x],h_B[(loc_{11},loc_{12})/bloc_1][(loc_{21},loc_{22})/bloc_2],[h_V|loc_{11}:1011,loc_{12}:\\
& 1012,loc_{21}:1011,loc_{22}:1012])))\\
& = \llbracket \mathbf{L_{12}} \rrbracket(\llbracket \mathbf{L_{11}} \rrbracket(s_F[(bloc_1)/f_1],s_B[bloc_1/b_1][bloc_2/b_2],s_V[3/i][1012/x], \\
&  
h_B[(loc_{11},loc_{12})/bloc_1][(loc_{21},loc_{22})/bloc_2],[h_V|loc_{11}:1011,loc_{12}:\\
& 1012,loc_{21}:1011,loc_{22}:1012]))\\
& = \llbracket \mathbf{L_{12}} \rrbracket(\llbracket f_2 \coloneqq \mathbf{create}() \rrbracket(s_F[(bloc_1)/f_1],s_B[bloc_1/b_1][bloc_2/b_2],s_V[3/ \\
&  i][1012/x],h_B[(loc_{11},loc_{12})/bloc_1][(loc_{21},loc_{22})/bloc_2],[h_V|loc_{11}:1011,\\
& loc_{12}:1012,loc_{21}:1011,loc_{22}:1012]))\\
& = \llbracket \mathbf{L_{12}} \rrbracket(s_F[(bloc_1)/f_1][\mathbf{nil}/f_2],s_B[bloc_1/b_1][bloc_2/b_2],s_V[3/i][1012/x], \\
&  h_B[(loc_{11},loc_{12})/bloc_1][(loc_{21},loc_{22})/bloc_2],[h_V|loc_{11}:1011,loc_{12}:\\
& 1012,loc_{21}:1011,loc_{22}:1012])\\
& = \llbracket \mathbf{attach}(f_2,b_2) \rrbracket(s_F[(bloc_1)/f_1][\mathbf{nil}/f_2],s_B[bloc_1/b_1][bloc_2/b_2],s_V[3/ \\
&  i][1012/x],h_B[(loc_{11},loc_{12})/bloc_1][(loc_{21},loc_{22})/bloc_2],[h_V|loc_{11}:1011,\\
& loc_{12}:1012,loc_{21}:1011,loc_{22}:1012])\\
& = (s_F[(bloc_1)/f_1][(bloc_2)/f_2],s_B[bloc_1/b_1][bloc_2/b_2],s_V[3/i][1012/x],\\
& h_B[(loc_{11},loc_{12})/bloc_1]  [(loc_{21},loc_{22})/bloc_2],[h_V|loc_{11}:1011,loc_{12}:\\
& 1012,loc_{21}:1011,loc_{22}:1012])\\
\end{align*}
\end{scriptsize}

Hence

\begin{scriptsize}
\begin{align*}
& ~ \llbracket \mathbf{L_2};\mathbf{L_3};\mathbf{L_4};\mathbf{L_5};\mathbf{W};\mathbf{L_{11}};\mathbf{L_{12}} \rrbracket(s_F,s_B,s_V,h_B,h_V)\\
& = (s_F[(bloc_1)/f_1][(bloc_2)/f_2],s_B[bloc_1/b_1][bloc_2/b_2],s_V[3/i][1012/x],\\
& h_B[(loc_{11},loc_{12})/bloc_1]  [(loc_{21},loc_{22})/bloc_2],[h_V|loc_{11}:1011,loc_{12}:\\
& 1012,loc_{21}:1011,loc_{22}:1012])\\
\end{align*}
\end{scriptsize}

Finally, the program is terminated, and we got a final state.

\section{Soundness of the Specification Axioms}

The soundness of axioms can be proved in denotational semantics by case:

\textbf{Location Commands:}

\begin{itemize}
	\item Skip
\end{itemize}

$\llbracket \mathbf{skip} \rrbracket \sigma = \sigma$\\

\begin{scriptsize}
\begin{tabular}{c}
	$\sigma \models p$
	\\
	\hline
	$\sigma \models p$
\end{tabular}
\end{scriptsize}\\[5mm]

\begin{itemize}
	\item The Simple Assignment form (SA)
\end{itemize}

$\llbracket x:=e \rrbracket \sigma = (s_F,s_B,s_V[\llbracket e \rrbracket \sigma/x],h_B,h_V)$ 
\\

let ${s_V}' \equiv s_V[\llbracket e \rrbracket \sigma/x]$

\begin{scriptsize}
\begin{tabular}{c}
	$\sigma \models \langle x=x' \land \mathbf{emp}_V, \mathbf{emp}_B \rangle$
	\\
	\hline
	$\sigma \models x=x' \land \mathbf{emp}_V$
	$~\text{and}~\sigma \models \mathbf{emp}_B$
	\\
	\hline
	$\llbracket x \rrbracket \sigma=\llbracket x' \rrbracket \sigma, ~ \textrm{dom}(h_V) = \{\},$
	$~\text{and}~ \textrm{dom}(h_B) = \{\}$
	\\
	\hline
	$\textrm{dom}(h_V) = \{\}\land$
	\\
	$ {s_V}'(x) = \llbracket e[x'/x] \rrbracket (s_F,s_B,{s_V}',h_B) ~\text{and}~ \textrm{dom}(h_B) = \{\}$
	\\
	\hline
	$s_F,s_B,{s_V}',h_B,h_V \models x=e[x'/x] \land \mathbf{emp}_V ~\text{and}~ $\\
	$s_F,s_B,{s_V}',h_B,h_V \models \mathbf{emp}_B$
	\\
	\hline
	$s_F,s_B,{s_V}',h_B,h_V \models \langle x=e[x'/x] \land \mathbf{emp}_V , \mathbf{emp}_B \rangle   $
\end{tabular}
\end{scriptsize}
\\[5mm]

\begin{itemize}
	\item The Location Allocation form (LA)
\end{itemize}

$\llbracket x:=\mathbf{cons}(\bar{e})\rrbracket \sigma =(s_F,s_B,s_V[loc/x],h_B,[h_V|loc:\llbracket e_1 \rrbracket \sigma|...|loc+n-1:\llbracket e_n \rrbracket \sigma])$

where $loc,...,loc+n-1 \in \textrm{Loc} - \text{dom}(h_V)$

let ${s_V}'\equiv s_V[loc/x]$ and ${h_V}' \equiv h_V|loc:\llbracket e_1 \rrbracket \sigma|...|loc+n-1:\llbracket e_n \rrbracket \sigma$, then:

\begin{scriptsize}
\begin{tabular}{c}
	$\sigma \models \langle x=x' \land \mathbf{emp}_V, \mathbf{emp}_B \rangle$
	\\
	\hline
	$\sigma \models x=x' \land \mathbf{emp}_V$
	$~\text{and}~\sigma \models \mathbf{emp}_B$
	\\
	\hline
	$\llbracket x \rrbracket \sigma=\llbracket x' \rrbracket \sigma, ~ \textrm{dom}(h_V) = \{\},$
	$~\text{and}~ \textrm{dom}(h_B) = \{\}$
	\\
	\hline
	$\llbracket x \rrbracket \sigma=\llbracket x' \rrbracket \sigma, ~ \textrm{dom}({h_V}') = \{loc,...,loc+n-1\}, ~ {h_V}'(loc) = $
	\\
	$  \llbracket e_1 \rrbracket \sigma \land ... \land {h_V}'(loc+n-1) = \llbracket e_n \rrbracket \sigma, ~\text{and}~ \textrm{dom}(h_B) = \{\}$
	\\
	\hline
	$\textrm{dom}({h_V}') = \{loc,...,loc+n-1\}, ~ {h_V}'(loc) = \llbracket e_1[x'/x] \rrbracket $
	\\
	$  (s_F,s_B,{s_V}',h_B) \land ... \land {h_V}'(loc+n-1) = \llbracket e_n[x'/x] \rrbracket $\\
	$ (s_F,s_B,{s_V}',h_B)~\text{and}~{s_V}'(x) = loc~\text{and}~ \textrm{dom}(h_B) = \{\}$
	\\
	\hline
	$\textrm{dom}({h_V}') = \{loc,...,loc+n-1\}\land {h_V}'(loc) = \llbracket e_1[x'/x] \rrbracket $
	\\
	$ (s_F,s_B,{s_V}',h_B) \land ... \land {h_V}'(loc+n-1) = \llbracket e_n[x'/x] \rrbracket $\\
	$ (s_F,s_B,{s_V}',h_B) ~\text{and}~{s_V}'(x) = loc~\text{and}~ \textrm{dom}(h_B) = \{\}$
	\\
	\hline
	$s_F,s_B,{s_V}',h_B,{h_V}' \models x \mapsto e_1[x'/x],...,e_n[x'/x]~\text{and}~$\\
	$s_F,s_B,{s_V}',h_B,{h_V}' \models \mathbf{emp}_B$
	\\
	\hline
	$s_F,s_B,{s_V}',h_B,{h_V}' \models \langle x \mapsto e_1[x'/x],...,e_n[x'/x], \mathbf{emp}_B \rangle$
\end{tabular}
\end{scriptsize}
\\[5mm]

\begin{itemize}
	\item The Location Lookup form (LL)
\end{itemize}

$\llbracket x:=[e] \rrbracket \sigma = (s_F,s_B,s_V[h_V(\llbracket e \rrbracket \sigma)/x],h_B,h_V)$
\\

let ${s_V}' \equiv s_V[h_V(\llbracket e \rrbracket \sigma)/x]$, then:

\begin{scriptsize}
\begin{tabular}{c}
	$\sigma \models \langle x = x' \land e \mapsto x'', \mathbf{emp}_B \rangle$
	\\
	\hline
	$\sigma \models x = x' \land e \mapsto x''$
	$~\text{and}~ \sigma \models \mathbf{emp}_B$
	\\
	\hline
	$\llbracket x \rrbracket \sigma = \llbracket x' \rrbracket \sigma,$
	$~ \textrm{dom}(h_V) = \{\llbracket e \rrbracket\sigma\}, ~ h_V(\llbracket e \rrbracket\sigma) = \llbracket x'' \rrbracket\sigma, ~\text{and}~$\\
	$ \textrm{dom}(h_B) =\{\}$
	\\
	\hline
	$\textrm{dom}(h_V) = \{\llbracket e[x'/x] \rrbracket(s_F,s_B,{s_V}',h_B)\}, ~$\\
	$h_V(\llbracket e[x'/x] \rrbracket(s_F,s_B,{s_V}',h_B)) = \llbracket x'' \rrbracket\sigma, $
	$~\text{and}~ \textrm{dom}(h_B) =\{\}$
	\\
	\hline
	$\llbracket x \rrbracket (s_F,s_B,{s_V}',h_B,h_V) = h_V(\llbracket e \rrbracket \sigma) = \llbracket x'' \rrbracket (s_F,s_B,{s_V}',h_B,h_V),$
	\\
	$ \textrm{dom}(h_V) = \{\llbracket e[x'/x] \rrbracket(s_F,s_B,{s_V}',h_B,h_V)\}, ~$\\
	$ h_V(\llbracket e[x'/x] \rrbracket(s_F,s_B,{s_V}',h_B,h_V)) = \llbracket x'' \rrbracket(s_F,s_B,{s_V}',h_B,h_V), $
	\\
	$~\text{and}~ \textrm{dom}(h_B) =\{\}$
	\\
	\hline
	$s_F,s_B,{s_V}',h_B,h_V \models x=x'' \land e[x'/x] \mapsto x'' ~\text{and}~$\\
	$s_F,s_B,{s_V}',h_B,h_V \models \mathbf{emp}_B$
	\\
	\hline
	$s_F,s_B,{s_V}',h_B,h_V \models \langle x=x'' \land e[x'/x] \mapsto x'', \beta \rangle$
\end{tabular}
\end{scriptsize}
\\[5mm]

\begin{itemize}
	\item The Location Mutation form (LM)
\end{itemize}

$\llbracket [e]:=e' \rrbracket \sigma = (s_F,s_B,s_V,h_B,h_V[\llbracket e' \rrbracket \sigma/\llbracket e \rrbracket \sigma])$
\\

let ${h_V}' \equiv h_V[\llbracket e' \rrbracket \sigma/\llbracket e \rrbracket \sigma]$, then:

The soundness of axioms can be proved in denotational semantics by case:

\begin{scriptsize}
\begin{tabular}{c}
	$\sigma \models \langle e \mapsto -, \mathbf{emp}_B \rangle$
	\\
	\hline
	$\sigma \models e \mapsto -$
	$~\text{and}~\sigma \models \mathbf{emp}_B$
	\\
	\hline
	$\textrm{dom}(h_V) = \{\llbracket e \rrbracket\sigma\}$
	$~\text{and}~ \textrm{dom}(h_B) =\{\}$
	\\
	\hline
	$\textrm{dom}({h_V}') = \{\llbracket e \rrbracket\sigma\} ~\text{and}~ {h_V}'(\llbracket e \rrbracket\sigma) = \llbracket e' \rrbracket \sigma$
	$~\text{and}~ \textrm{dom}(h_B) =\{\}$
	\\
	\hline
	$s_F,s_B,s_V,h_B,{h_V}' \models e \mapsto e'$
	$~\text{and}~s_F,s_B,s_V,h_B,{h_V}' \models \mathbf{emp}_B$
	\\
	\hline
	$s_F,s_B,s_V,h_B,{h_V}' \models \langle e \mapsto e', \mathbf{emp}_B \rangle$
\end{tabular}
\end{scriptsize}
\\[5mm]

\begin{itemize}
	\item The Deallocation form (DL)
\end{itemize}

$\llbracket \mathbf{dispose}~e \rrbracket \sigma = (s_F,s_B,s_V,h_B,h_V\rceil(\textrm{dom}(h_V)-\{\llbracket e \rrbracket \sigma\}))$
\\

let ${h_V}' \equiv h_V\rceil(\textrm{dom}(h_V)-\{\llbracket e \rrbracket \sigma\}$, then:

\begin{scriptsize}
\begin{tabular}{c}
	$\sigma \models \langle e \mapsto -, \mathbf{emp}_B \rangle$
	\\
	\hline
	$\sigma \models e \mapsto -$
	$~\text{and}~\sigma \models \mathbf{emp}_B$
	\\
	\hline
	$\textrm{dom}(h_V) = \{\llbracket e \rrbracket\sigma\}$
	$~\text{and}~ \textrm{dom}(h_B) =\{\}$
	\\
	\hline
	$\textrm{dom}({h_V}') = \{\llbracket e \rrbracket\sigma\} - \{\llbracket e \rrbracket\sigma\} = \{\}$
	$~\text{and}~ \textrm{dom}(h_B) =\{\}$
	\\
	\hline
	$s_F,s_B,s_V,h_B,{h_V}' \models \mathbf{emp}_V$
	$~\text{and}~s_F,s_B,s_V,h_B,{h_V}' \models \mathbf{emp}_B$
	\\
	\hline
	$s_F,s_B,s_V,h_B,{h_V}' \models \langle \mathbf{emp}_V, \mathbf{emp}_B \rangle$
\end{tabular}
\end{scriptsize}
\\[5mm]

\textbf{File Commands}

\begin{itemize}
	\item The File Creation form (FC)
\end{itemize}

$\llbracket f:=\mathbf{create}(bk^*)\rrbracket \sigma = (s_F[(\llbracket bk_1 \rrbracket \sigma,...,\llbracket bk_n \rrbracket \sigma)/f],s_B,$\\$s_V,h_B,h_V)$

where the term of sequence determined by $h_B(\llbracket bk_i \rrbracket \sigma) ~ (1 \leq i \leq n)$ belong to $\textrm{dom}(h_V)$.

let ${s_F}' \equiv s_F[(\llbracket bk_1 \rrbracket \sigma,...,\llbracket bk_n \rrbracket \sigma)/f]$, as $f$ does not appear in $\alpha$ or $bk_1,...,bk_n$, so we have:

\begin{scriptsize}
\begin{tabular}{c}
	$\sigma \models \langle \alpha,f=\mathbf{nil} \land \beta \rangle$
	\\
	\hline
	$\sigma \models  \alpha, ~ \sigma \models f=\mathbf{nil}, ~\text{and}~ \sigma \models \beta$
	\\
	\hline
	$\sigma \models  \alpha, ~ \llbracket f \rrbracket \sigma = \llbracket \mathbf{nil} \rrbracket \sigma, ~\text{and}~ \sigma \models \beta$
	\\
	\hline
	${s_F}',s_B,s_V,h_B,h_V \models \alpha, ~ {s_F}'(f)=(\llbracket bk_1 \rrbracket ({s_F}',s_B,s_V,h_B),$
	\\
	$...,\llbracket bk_n \rrbracket ({s_F}',s_B,s_V,h_B)) ~\text{and}~ \sigma \models \beta$
	\\
	\hline
	${s_F}',s_B,s_V,h_B,h_V \models \alpha ~\text{and}~$ \\
	${s_F}',s_B,s_V,h_B,h_V \models \beta[(bk_1 , ... , bk_n)/f]$
	\\
	\hline
	${s_F}',s_B,s_V,h_B,h_V \models \langle \alpha , \beta[(bk_1 , ... , bk_n)/f] \rangle$
\end{tabular}
\end{scriptsize}
\\[5mm]

\begin{itemize}
	\item The Block Address Appending form (BAA)
\end{itemize}

$ \llbracket \mathbf{attach}(f,bk^*)\rrbracket \sigma =(s_F[(s_F(f)\centerdot (\llbracket bk_1 \rrbracket \sigma,...,\llbracket bk_n \rrbracket \sigma))$\\$/f],s_B,s_V,h_B,h_V)$

where the term of sequence determined by $h_B(\llbracket bk_i \rrbracket \sigma)$ ($1 \leq i \leq n$) belong to $\textrm{dom}(h_V)$.
\\

let ${s_F}' \equiv s_F[(s_F(f)\centerdot (\llbracket bk_1 \rrbracket \sigma,...,\llbracket bk_n \rrbracket \sigma))/f]$, then:

\begin{scriptsize}
\begin{tabular}{c}
	$\sigma \models \langle \alpha,f=f' \land \beta \rangle$
	\\
	\hline
	$\sigma \models  \alpha, ~ \sigma \models f=f', ~\text{and}~ \sigma \models \beta$
	\\
	\hline
	$\sigma \models  \alpha, ~ \llbracket f \rrbracket \sigma = \llbracket f' \rrbracket \sigma, ~\text{and}~ \sigma \models \beta$
	\\
	\hline
	${s_F}',s_B,s_V,h_B,h_V \models \alpha[f'/f], ~ \sigma \models \beta, ~\text{and}~$
	\\
	${s_F}'(f)=(f' \centerdot (\llbracket bk_1[f'/f] \rrbracket ({s_F}',s_B,s_V,h_B),$\\
	$...,\llbracket bk_n[f'/f] \rrbracket ({s_F}',s_B,s_V,h_B)))$
	\\
	\hline
	${s_F}',s_B,s_V,h_B,h_V \models \alpha[f'/f]~\text{and}~$ \\
	${s_F}',s_B,s_V,h_B,h_V \models \beta[f' \cdot (bk_1[f'/f],...,bk_n[f'/f])/f]$
	\\
	\hline
	${s_F}',s_B,s_V,h_B,h_V \models \langle \alpha[f'/f] , \beta[f' \cdot$\\
	$ (bk_1[f'/f],...,bk_n[f'/f])/f] \rangle$
\end{tabular}
\end{scriptsize}
\\[5mm]

\begin{itemize}
	\item The File Deletion form (FD)
\end{itemize}

$\llbracket \mathbf{delete}~f\rrbracket (s_V,s_B,s_F,h_B,h_V) = (s_F[\llbracket \mathbf{nil} \rrbracket \sigma/f],s_B,s_V,$\\$h_B,h_V)$

let ${s_F}' \equiv s_F[\llbracket \mathbf{nil} \rrbracket \sigma/f]$, then:

\begin{scriptsize}
\begin{tabular}{c}
	$\sigma \models \langle \alpha , f = f' \land \beta \rangle$
	\\
	\hline
	$\sigma \models \alpha, ~ \sigma \models f = f', ~\text{and}~ \sigma \models \beta$
	\\
	\hline
	$\sigma \models  \alpha, ~ \llbracket f \rrbracket \sigma = \llbracket f' \rrbracket \sigma,  ~\text{and}~ \sigma \models \beta$
	\\
	\hline
	${s_F}',s_B,s_V,h_B,h_V \models \alpha[f'/f], ~ {s_F}'(f)=(\llbracket \mathbf{nil} \rrbracket$\\
	$({s_F}',s_B,s_V,h_B)), ~\text{and}~ \sigma \models \beta$
	\\
	\hline
	${s_F}',s_B,s_V,h_B,h_V \models \alpha[f'/f]~\text{and}~$ \\
	${s_F}',s_B,s_V,h_B,h_V \models \beta[\mathbf{nil}/f]$
	\\
	\hline
	${s_F}',s_B,s_V,h_B,h_V \models \langle \alpha[f'/f] , \beta[\mathbf{nil}/f] \rangle$
\end{tabular}
\end{scriptsize}
\\[5mm]

\textbf{Block Commands}

\begin{itemize}
	\item The Block Allocation form (BA)
\end{itemize}

$\llbracket b:=\mathbf{allocate}(\bar{e}) \rrbracket \sigma = (s_F,s_B[bloc/b],s_V,h_B[(loc_1,...,$\\$loc_n)/bloc],[h_V|loc_1:\llbracket e_1 \rrbracket \sigma,...,loc_n:\llbracket e_n \rrbracket \sigma])$

where $bloc \in \textrm{BLoc} - \textrm{dom}(h_B)$ , and $loc_1,...,loc_n \in \textrm{Loc} - \textrm{dom}(h_V)$

let ${s_B}' \equiv s_B[bloc/b]$, ${h_B}'=h_B[(loc_1,...,loc_n)/bloc]$ and ${h_V}'\equiv [h_V|loc_1:\llbracket e_1 \rrbracket \sigma,...,loc_n:\llbracket e_n \rrbracket \sigma]$, where $bloc \in \textrm{BLoc} - \textrm{dom}(h_B)$ , and $loc_1,...,loc_n \in \textrm{Loc} - \textrm{dom}(h_V)$, 
as $b$ is not free in $e_1,...,e_n$, so we have:

\begin{scriptsize}
\begin{tabular}{c}
	$\sigma \models \langle \mathbf{emp}_V , \mathbf{emp}_B \rangle$
	\\
	\hline
	$\sigma \models \mathbf{emp}_V$
	$~\text{and}~\sigma \models \mathbf{emp}_B$
	\\
	\hline
	$\textrm{dom}(h_V) = \{\}$
	$~\text{and}~ \textrm{dom}(h_B) = \{\}$
	\\
	\hline
	$\textrm{dom}({h_V}')=\{loc_1,...,loc_n\} \land {h_V}'(loc_1)=\llbracket e_1 \rrbracket (s_F,{s_B}',[s_V$\\
	$|l_1:loc_1,...,l_n:loc_n],{h_B}') \land ... \land {h_V}'(loc_1)=\llbracket e_n \rrbracket$\\
	$  (s_F,{s_B}',[s_V|l_1:loc_1,...,l_n:loc_n],{h_B}'),  ~  \textrm{dom}({h_B}') = $ \\  
	$ \{bloc\}, ~ {h_B}'(bloc)=(loc_1,...,loc_n), ~\text{and}~ {s_B}'(b)=bloc$ 
	\\
	\hline
	$\textrm{dom}({h_V}')=\{loc_1,...,loc_n\} \land {h_V}'(loc_1)=\llbracket e_1 \rrbracket (s_F,{s_B}',[s_V$\\
	$|l_1:m_1,...,l_n:m_n],{h_B}') \land ... \land {h_V}'(loc_n)=\llbracket e_n \rrbracket (s_F,{s_B}',$\\
	$  [s_V|l_1:m_1,...,l_n:m_n],{h_B}'), ~ $$\textrm{dom}({h_B}') = \{bloc\}, ~ {h_B}'(bloc)=$ \\  
	$  (loc_1,...,loc_n), ~\text{and}~ {s_B}'(b)=bloc, ~\text{for some}~ m_1,...,m_n \in \textrm{Loc}$ 
	\\
	\hline
	$s_F,{s_B}',s_V,{h_B}',{h_V}' \models \exists \bar{l}.\langle  \bar{l} \mapsto \bar{e} , b \mapsto \bar{l}  \rangle$
\end{tabular}
\end{scriptsize}
\\[5mm]

\begin{itemize}
	\item Alternative Axiom for Block Allocation (BAalt) 
\end{itemize}

$\llbracket b:=\mathbf{allocate}(\bar{e}) \rrbracket \sigma = (s_F,s_B[bloc/b],s_V,h_B[(loc_1,...,$\\$loc_n)/bloc],[h_V|loc_1:\llbracket e_1 \rrbracket \sigma,...,loc_n:\llbracket e_n \rrbracket \sigma])$

where $bloc \in \textrm{BLoc} - \textrm{dom}(h_B)$ , and $loc_1,...,loc_n \in \textrm{Loc} - \textrm{dom}(h_V)$

let ${s_B}' \equiv s_B[bloc/b]$, ${s_B}'' \equiv {s_B}'[bloc/b']$, ${h_B}'=h_B[(loc_1,$\\$...,loc_n)/bloc]$ and ${h_V}'\equiv [h_V|loc_1:\llbracket e_1 \rrbracket \sigma,...,loc_n:\llbracket e_n \rrbracket \sigma]$, where $bloc \in \textrm{BLoc} - \textrm{dom}(h_B)$ , and $loc_1,...,loc_n \in \textrm{Loc} - \textrm{dom}(h_V)$, 
as $b$ is not free in $e_1,...,e_n$, so we have:

\begin{scriptsize}
\begin{tabular}{c}
	$\sigma \models \langle \mathbf{emp}_V , \mathbf{emp}_B \rangle$
	\\
	\hline
	$\sigma \models \mathbf{emp}_V$
	$~\text{and}~\sigma \models \mathbf{emp}_B$
	\\
	\hline
	$\textrm{dom}(h_V) = \{\}$
	$~\text{and}~\textrm{dom}(h_B) = \{\}$
	\\
	\hline
	${s_B}' = bloc ~\text{and}~  \textrm{dom}(h_B) = \{bloc\} $
	\\
	\hline
	${s_B}''(b)={s_B}'[bloc/b'](b)=s_B[bloc/b][bloc/b'](b)=bloc, ~$ \\
	${s_B}''(b')={s_B}'[bloc/b'](b')=bloc, ~\text{and}~ \textrm{dom}(h_B) = \{bloc\}$
	\\
	\hline
	$s_F,{s_B}'[bloc/b'],s_V,{h_B}',{h_V}' \models b==b' \land b'  \looparrowright  b' $ \\
	$s_F,{s_B}'[bloc/b'],s_V,{h_B}',{h_V}' \models \mathbf{true}_V$
	\\
	\hline
	$s_F,{s_B}'[m/b'],s_V,{h_B}',{h_V}' \models b==b' \land b'  \looparrowright  b' $ \\
	$s_F,{s_B}'[m/b'],s_V,{h_B}',{h_V}' \models \mathbf{true}_V$ for some $m \in \textrm{BLoc}$
	\\
	\hline
	$s_F,{s_B}'[m/b'],s_V,{h_B}',{h_V}' \models \langle \mathbf{true}_V , b == b' \land b'  \looparrowright  b' \rangle$ \\
	for some $m \in \textrm{BLoc}$
	\\
	\hline
	$s_F,{s_B}',s_V,{h_B}',{h_V}' \models \exists b'. \langle \mathbf{true}_V , b == b' \land b'  \looparrowright  b' \rangle$
\end{tabular}
\end{scriptsize}
\\[5mm]

\begin{itemize}
	\item The Block Content Append form (BCA) 
\end{itemize}

$\llbracket \mathbf{append}(bk,e) \rrbracket \sigma = (s_F,s_B,s_V,h_B[(loc_1,...,loc_m,$\\
$loc_{m+1})/\llbracket bk \rrbracket \sigma],[h_V|loc_{m+1}:\llbracket e \rrbracket \sigma])$

where $ h_B(\llbracket bk \rrbracket \sigma)=(loc_1,...,loc_m) $, the term of sequence 
$ (loc_1,...,loc_m) $ belong to $\textrm{dom}(h_V)$, and $loc_{m+1} \in \textrm{Loc} - \textrm{dom}(h_V)$;

let ${h_B}' \equiv h_B[(loc_1,...,loc_m,loc_{m+1})/\llbracket bk \rrbracket \sigma]$ and ${h_V}' \equiv [h_V|loc_{m+1}:\llbracket e \rrbracket \sigma]$, where $h_B(\llbracket bk \rrbracket \sigma)=(loc_1,...,loc_m)$ , and $loc_{m+1} \in \textrm{Loc} - \textrm{dom}(h_V)$, as $bk$ is not free in ${e_1}',...,{e_m}',e$ and $\#bk$ does not appear in $bk$. Note that we can get $\llbracket bk \rrbracket (s_F,s_B,s_V,{h_B}')$ from $\llbracket bk \rrbracket \sigma$, so we have:

\begin{scriptsize}
\begin{tabular}{c}
	$\sigma \models \exists \bar{l'}. \langle \bar{l'} \looparrowright \bar{e'} , bk \mapsto \bar{l'} \rangle$
	\\
	\hline
	$s_F,s_B,[s_V|{l_1}':n_1,...,{l_m}':n_m],h_B,h_V \models {l_1}' \mapsto {e_1}' * ... * $\\
	$ {l_m}' \mapsto {e_m}'$ and $s_F,s_B,[s_V|{l_1}':n_1,...,{l_m}':n_m],h_B,h_V \models$\\
	$ bk \mapsto ({l_1}',...,{l_m}')$, for some $n_1,...,n_m \in \textrm{Loc}$
	\\
	\hline
	$\textrm{dom}(h_V) = \{n_1,...,n_m\},  h_V(n_1)=\llbracket {e_1}' \rrbracket (s_F,s_B,[s_V|{l_1}':$\\
	$ n_1,...,{l_m}':n_m],h_B) , ... ,h_V(n_m)=\llbracket {e_m}' \rrbracket (s_F,s_B,[s_V|$\\
	${l_1}':n_1,...,{l_m}':n_m],h_B)$, $\textrm{dom}(h_B) = \{\llbracket bk \rrbracket (s_F,s_B,[s_V|$\\ 
	${l_1}':n_1,...,{l_m}':n_m],h_B)\}$, $h_B(\llbracket bk \rrbracket (s_F,s_B,[s_V|{l_1}':n_1,$\\
	$  ...,{l_m}':n_m],h_B))=(n_1,...,n_m)$, for some $n_1,...,n_m \in \textrm{Loc}$
	\\
	\hline
	$\textrm{dom}({h_V}') = \{n_1,...,n_m\} \cup \{n\} $, $h_V(n_1)=\llbracket {e_1}' \rrbracket (s_F,s_B,$\\
	$  [s_V|{l_1}':n_1,...,{l_m}':n_m,l:n],{h_B}') ,...,h_V(n_m)=\llbracket {e_m}' \rrbracket $\\
	$  (s_F,s_B,[s_V|{l_1}':n_1,...,{l_m}':n_m,l:n],{h_B}'), h_V(n_{m+1})=$\\
	$  \llbracket e \rrbracket (s_F,s_B,[s_V|{l_1}':n_1,...,{l_m}':n_m,l:n],{h_B}'), \textrm{dom}({h_B}') =$  \\
	$ \{\llbracket bk \rrbracket (s_F,s_B,[s_V|{l_1}':n_1,...,{l_m}':n_m,l:n],{h_B}')\}  $, and  \\
	${h_B}'(\llbracket bk \rrbracket (s_F,s_B,[s_V|{l_1}':n_1,...,{l_m}':n_m,l:n],{h_B}'))=$\\
	$(n_1,...,n_m,n)$  for some $n_1,...,n_m,n \in \textrm{Loc}$
	\\
	\hline
	$s_F,s_B,[s_V|{l_1}':n_1,...,{l_m}':n_m,l:n],{h_B}',{h_V}' \models \bar{l'} \looparrowright  $  \\
	$\bar{e'} * l \mapsto e$ and $s_F,s_B,[s_V|{l_1}':n_1,...,{l_m}':n_m,l:n],{h_B}',{h_V}'  $ \\
	$\models bk \mapsto \bar{l'} \cdot (l)$ for some $n_1,...,n_m,n \in \textrm{Loc}$
	\\
	\hline
	$s_F,s_B,s_V,{h_B}',{h_V}' \models \exists \bar{l'},l. \langle \bar{l'} \looparrowright \bar{e'} * l \mapsto e, bk \mapsto \bar{l'} \cdot (l) \rangle$
\end{tabular}
\end{scriptsize}
\\[5mm]

\begin{itemize}
	\item The Block Content Lookup form (BCL) 
\end{itemize}

$\llbracket x:=\{bk.e\} \rrbracket \sigma = (s_F,s_B,s_V[h_V(loc_i)/x],h_B,h_V)$

where $ h_B(\llbracket bk \rrbracket \sigma)=(loc_1,...,loc_n) $,$\llbracket e \rrbracket \sigma = i$ and  $1 \leq i \leq n$;

let ${s_V}'' \equiv [s_V|l_1:n_1,...,l_m:n_m]$, ${s_V}' \equiv {s_V}''[h_V(loc_i)/x]$, where $ h_B(\llbracket bk \rrbracket \sigma)=(loc_1,...,loc_n) $,  and  $1 \leq i \leq n$, then:

\begin{scriptsize}
\begin{tabular}{c}
	$\sigma \models \exists \bar{l}. \langle  x=x' \land e=i \land \bar{l} \looparrowright (\bar{e} | i \rightarrowtail x'') , bk \mapsto \bar{l} \rangle$
	\\
	\hline
	$\llbracket x \rrbracket (s_F,s_B,{s_V}'',h_B) = \llbracket x' \rrbracket (s_F,s_B,{s_V}'',h_B)$, $\textrm{dom}(h_V) =$\\
	$ \{n_1,...,n_m\}$, $h_V(n_1) =\llbracket e_1 \rrbracket(s_F,s_B,{s_V}'',h_B),... ,  h_V$\\
	$ (n_{\llbracket e \rrbracket(s_F,s_B,{s_V}'',h_B)}) =\llbracket x'' \rrbracket(s_F,s_B,{s_V}'',h_B), ... , $\\
	$ h_V(n_m) =\llbracket e_m \rrbracket(s_F,s_B,{s_V}'',h_B)$, $\textrm{dom}(h_B)=\{ \llbracket bk \rrbracket$\\
	$ (s_F,s_B,{s_V}'',h_B)\}, ~\text{and}~ h_B(\llbracket bk \rrbracket (s_F,s_B,{s_V}'',h_B))=$\\
	$(n_1,...,n_m)$, for some $n_1,...,n_m \in \textrm{Loc}$  
	\\
	\hline
	$\textrm{dom}(h_V) = \{n_1,...,n_m\}, ~ h_V(n_1) =\llbracket e_1[x'/x] \rrbracket (s_F,s_B,{s_V}', $\\
	$h_B), ... ,h_V(n_{\llbracket e[x'/x] \rrbracket(s_F,s_B,{s_V}',h_B)}) =\llbracket x'' \rrbracket (s_F,s_B,{s_V}', $ \\
	$h_B), ... , h_V(n_m) =\llbracket e_m[x'/x] \rrbracket (s_F,s_B,{s_V}',h_B) $, $\textrm{dom}(h_B)=$\\
	$\{ \llbracket bk[x'/x] \rrbracket (s_F,s_B,{s_V}',h_B)\}, ~\text{and}~ h_B(\llbracket bk[x'/x] \rrbracket (s_F,s_B,$\\
	${s_V}',h_B))= (n_1,...,n_m)$, for some $n_1,...,n_m \in \textrm{Loc}$
	\\
	\hline
	$s_F,s_B,{s_V}',h_B,h_V \models x=x'' \land e[x'/x]=i \land \bar{l} \looparrowright (\bar{e}[x'/x]$\\
	$ | i \rightarrowtail x'')$ and  $s_F,s_B,{s_V}',h_B,h_V \models bk[x'/x] \mapsto \bar{l}$,\\
	 for some $n_1,...,n_m \in \textrm{Loc}$
	\\
	\hline
	$s_F,s_B,{s_V}',h_B,h_V \models \exists \bar{l}. \langle x=x'' \land e[x'/x]=i \land  $\\
	$\bar{l} \looparrowright (\bar{e}[x'/x] | i \rightarrowtail x'') , bk[x'/x] \mapsto \bar{l} \rangle $
\end{tabular}
\end{scriptsize}
\\[5mm]

\begin{itemize}
	\item The Block Address Assignment form (BAA)
\end{itemize}

$\llbracket b:=bk \rrbracket \sigma = (s_F,s_B[\llbracket bk \rrbracket \sigma/b],s_V,h_B,h_V)$

Let ${s_B}' \equiv s_B[\llbracket bk \rrbracket \sigma/b]$, then:

\begin{scriptsize}
\begin{tabular}{c}
	$\sigma \models \langle \alpha , b == b' \land \beta \rangle$
	\\
	\hline
	$\sigma \models \alpha$
	$~\text{and}~\sigma \models b == b' \land \beta$
	\\
	\hline
	$\sigma \models \alpha, ~ \llbracket b \rrbracket \sigma=\llbracket b' \rrbracket \sigma,  $
	$~\text{and}~\sigma \models \beta$
	\\
	\hline
	$s_F,{s_B}',s_V,h_B,h_V \models \alpha[b'/b] \land {s_B}'(b) = \llbracket bk[b'/b] \rrbracket $
	\\
	$  (s_F,{s_B}',s_V,h_B) ~\text{and}~s_F,{s_B}',s_V,h_B,h_V \models \beta[b'/b]$
	\\
	\hline
	$s_F,{s_B}',s_V,h_B,h_V \models \alpha[b'/b] ~\text{and}~ $\\
	$ s_F,{s_B}',s_V,h_B,h_V \models b == bk[b'/b] \land \beta[b'/b]$
	\\
	\hline
	$s_F,{s_B}',s_V,h_B,h_V \models \langle \alpha[b'/b] , b == bk[b'/b] \land \beta[b'/b] \rangle   $
\end{tabular}
\end{scriptsize}
\\[5mm]

\begin{itemize}
	\item The Block Address Assignment form (BAAalt)
\end{itemize}

$\llbracket b:=f.i \rrbracket \sigma = (s_V,s_B[\llbracket f.i \rrbracket \sigma/b],s_F,h_B,h_V)$

Let ${s_B}' \equiv s_B[\llbracket f.i \rrbracket \sigma/b]$, then:

\begin{scriptsize}
\begin{tabular}{c}
	$\sigma \models \langle \#f_2=i-1 \land \alpha,f = f_2 \cdot b' \cdot f_3  \land \beta \rangle$
	\\
	\hline
	$\sigma \models \#f_2=i-1 \land \alpha$
	$~\text{and}~\sigma \models f = f_2 \cdot b' \cdot f_3  \land \beta$
	\\
	\hline
	$\llbracket  \#f_2 \rrbracket \sigma=\llbracket i-1 \rrbracket \sigma, ~\sigma \models \alpha, ~ \llbracket f \rrbracket \sigma=\llbracket f_2 \cdot b' \cdot f_3 \rrbracket \sigma,  $
	$~\text{and}~\sigma \models \beta$
	\\
	\hline
	$\llbracket  b \rrbracket(s_F,{s_B}',s_V,h_B,h_V) = \llbracket  b' \rrbracket(s_F,{s_B}',s_V,h_B,h_V)$
	\\
	$\llbracket  \#f_2 \rrbracket (s_F,{s_B}',s_V,h_B,h_V)=\llbracket i-1 \rrbracket (s_F,{s_B}',s_V,h_B,h_V), ~   $ \\
	$ (s_F,{s_B}',s_V,h_B,h_V) \models \alpha, ~ \llbracket f \rrbracket(s_F,{s_B}',s_V,h_B,h_V)=$\\
	$\llbracket f_2 \cdot b' \cdot f_3 \rrbracket (s_F,{s_B}',s_V,h_B,h_V),  $
	$~\text{and}~(s_F,{s_B}',s_V,h_B,h_V) \models \beta$
	\\
	\hline
	$s_F,{s_B}',s_V,h_B,h_V \models \#f_2=i-1 \land \alpha ~\text{and}~ $\\
	$s_F,{s_B}',s_V,h_B,h_V \models f = f_2 \cdot b' \cdot f_3 \land b==b'  \land \beta$
	\\
	\hline
	$s_F,{s_B}',s_V,h_B,h_V \models \langle \#f_2=i-1 \land \alpha,$\\
	$ f = f_2 \cdot b' \cdot f_3 \land b==b'  \land \beta\rangle   $
\end{tabular}
\end{scriptsize}
\\[5mm]

\begin{itemize}
	\item The Block Address Replacement of a File form (BARF)
\end{itemize}

$\llbracket f.e:=bk\rrbracket \sigma = (s_F[(bloc_1,...,\llbracket bk \rrbracket \sigma,...,bloc_n)/f],s_B,s_V,$\\
$h_B,h_V)$

where  $s_F(f)=(bloc_1,...,bloc_i,...,bloc_n)$,$\llbracket e \rrbracket \sigma = i$ and  $1 \leq i \leq n$

Let ${s_F}' \equiv s_F[(bloc_1,...,\llbracket bk \rrbracket \sigma,...,bloc_n)/f]$, then:

\begin{scriptsize}
\begin{tabular}{c}
	$\sigma \models \langle \#f_2=e-1 \land \alpha ,f = f_2 \cdot bk' \cdot f_3 \land \beta \rangle$
	\\
	\hline
	$\sigma \models \#f_2=e-1 \land \alpha$
	$~\text{and}~\sigma \models f = f_2 \cdot bk' \cdot f_3 \land \beta$
	\\
	\hline
	$\llbracket  \#f_2 \rrbracket \sigma=\llbracket e-1 \rrbracket \sigma, ~ \sigma \models \alpha, ~ \llbracket f \rrbracket \sigma=\llbracket f_2 \cdot bk' \cdot f_3 \rrbracket \sigma,  $
	$~\text{and}~\sigma \models \beta$
	\\
	\hline
	$\llbracket f.e \rrbracket ({s_F}',s_B,s_V,h_B,h_V) = \llbracket  bk \rrbracket({s_F}',s_B,s_V,h_B,h_V)$
	\\
	$\llbracket  \#f_2 \rrbracket ({s_F}',s_B,s_V,h_B,h_V)=\llbracket e-1 \rrbracket ({s_F}',s_B,s_V,h_B,h_V), ~$\\
	$ ({s_F}',s_B,s_V,h_B,h_V) \models \alpha ~\text{and}~ \llbracket f \rrbracket ({s_F}',s_B,s_V,h_B,h_V)=$ \\
	$ \llbracket f_2 \cdot bk \cdot f_3 \rrbracket ({s_F}',s_B,s_V,h_B,h_V)  $
	$~\text{and}~({s_F}',s_B,s_V,h_B,h_V) \models \beta$
	\\
	\hline
	$s_F,{s_B}',s_V,h_B,h_V \models \#f_2=e-1 \land \alpha ~\text{and}~$\\
	$ s_F,{s_B}',s_V,h_B,h_V \models f = f_2 \cdot bk \cdot f_3 \land \beta$
	\\
	\hline
	$s_F,{s_B}',s_V,h_B,h_V \models \langle \#f_2=i-1 \land \alpha,$\\
	$ f = f_2 \cdot b' \cdot f_3 \land b==b'  \land \beta\rangle   $
\end{tabular}
\end{scriptsize}
\\[5mm]

\begin{itemize}
	\item Block Deletion form (BD)
\end{itemize}

$\llbracket \mathbf{delete}~b \rrbracket \sigma = (s_F,s_B,s_V,h_B\rceil(\textrm{dom}(h_B)-\{\llbracket b \rrbracket \sigma\}),h_V)$

let ${h_B}' \equiv h_B\rceil(\textrm{dom}(h_B)-\{\llbracket b \rrbracket \sigma\})$, then:

\begin{scriptsize}
\begin{tabular}{c}
	$\sigma \models \exists \bar{l}. \langle \bar{l} \looparrowright -, b \mapsto \bar{l} \rangle$
	\\
	\hline
	$\textrm{dom}(h_V) = \{n_1,...,n_m\} ~\text{and}~ h_V(n_1) = - , ... ,  h_V(n_m) = - $ and \\ 
	$\textrm{dom}(h_B)=\{ \llbracket b \rrbracket \sigma\} ~\text{and}~ h_B(\llbracket b \rrbracket \sigma)=(n_1,...,n_m)$,\\
	  for some $n_1,...,n_m \in \textrm{Loc}$
	\\
	\hline
	$\textrm{dom}(h_V) = \{n_1,...,n_m\} ~\text{and}~ h_V(n_1) = - , ... ,  h_V(n_m) = - $ and \\
	$\textrm{dom}({h_B}')=\{ \llbracket b \rrbracket \sigma\} - \{\llbracket b \rrbracket \sigma\} =\{\}  $ for some $n_1,...,n_m \in \textrm{Loc}$
	\\
	\hline
	$s_F,s_B,[s_V|l_1:n_1,...,l_m:n_m],{h_B}',h_V \models l_1 \mapsto - * ... * l_m \mapsto -$ 
	\\
	and $s_F,s_B,[s_V|l_1:n_1,...,l_m:n_m],{h_B}',h_V \models \mathbf{emp}_B$,\\
	 for some $n_1,...,n_m \in \textrm{Loc}$
	\\
	\hline
	$s_F,s_B,s_V,{h_B}',h_V \models \exists \bar{l}. \langle \bar{l} \looparrowright -, \mathbf{emp}_B \rangle$
\end{tabular}
\end{scriptsize}


\begin{thebibliography}{1}
	
	\bibitem{H1}
	Apache Software Foundation.: 
	\newblock Rebalance data blocks when new data nodes added or data nodes become full - ASF JIRA, (2019).
	\newblock {https://issues.apache.org/jira/browse/HADOOP-1652}
	
	\bibitem{berdine2005symbolic}
	Josh Berdine, Cristiano Calcagno, and Peter W O'hearn.:
	\newblock Symbolic execution with separation logic.
	\newblock In: {\em Asian Symposium on Programming Languages and Systems}, pages 52--68. Springer, (2005).
	
	\bibitem{birkedal2004local}
	Lars Birkedal, Noah Torp-Smith, and John C Reynolds.:
	\newblock Local reasoning about a copying garbage collector.
	\newblock In: {\em ACM SIGPLAN Notices}, volume 39, pages 220--231. ACM, (2004).
	
	
	\bibitem{brotherston2017biabduction}
	James Brotherston, Nikos Gorogiannis, and Max Kanovich.:
	\newblock Biabduction (and related problems) in array separation logic.
	\newblock In: {\em International Conference on Automated Deduction}, pages 472--490. Springer, (2017).
	
	\bibitem{brotherston2016model}
	James Brotherston, Nikos Gorogiannis, Max Kanovich, and Reuben Rowe.:
	\newblock Model checking for symbolic-heap separation logic with inductive
	predicates.
	\newblock {\em ACM SIGPLAN Notices}, 51(1):84--96, (2016).
	
	\bibitem{dodds2009deny}
	Mike Dodds, Xinyu Feng, Matthew Parkinson, and Viktor Vafeiadis.:
	\newblock Deny-guarantee reasoning.
	\newblock In: {\em European Symposium on Programming}, pages 363--377. Springer, (2009).
	
	\bibitem{dunlop1980comparative}
	Victor R Basili and Douglas D Dunlop.:
	\newblock A comparative analysis of functional correctness.
	\newblock In: {\em ACM Computing Surveys}, volume 14, pages 229--244. ACM, (1980).
	
	
	
	\bibitem{Ghemawat2003gfs}
	S. Ghemawat, H. Gobioff, and S. Leung.:
	\newblock The google file system.
	\newblock In: {\em Proceedings of the 19th ACM Symposium on Operating Systems
		Principles}, 29--43, (2003).
	
	\bibitem{hashem2015rise}
	Ibrahim Abaker Targio Hashem, Ibrar Yaqoob, Nor Badrul Anuar, Salimah Mokhtar,
	Abdullah Gani, and Samee Ullah Khan.:
	\newblock The rise of ``big data'' on cloud computing: Review and open
	research issues.
	\newblock {\em Information Systems}, 47:98--115, (2015).
	
	\bibitem{ishtiaq2001bi}
	Samin S Ishtiaq and Peter W O'hearn.:
	\newblock Bi as an assertion language for mutable data structures.
	\newblock {\em ACM SIGPLAN Notices}, 36(3):14--26, (2001).
	
	\bibitem{james2014program}
	Julian James Stephen, Savvas Savvides, Russell Seidel, and Patrick Eugster.:
	\newblock Program analysis for secure big data processing.
	\newblock In: {\em Proceedings of the 29th ACM/IEEE international conference on
		Automated software engineering}, pages 277--288. ACM, (2014).
	
	\bibitem{Jin2019}
	Zhao Jin, Hanpin Wang, Lei Zhang, Bowen Zhang, Kun Gao, and Yongzhi Cao.:
	\newblock {Reasoning about Block-based Cloud Storage Systems}.
	\newblock {arXiv:1904.04442 [cs.LO]} (2019).
	\newblock {https://arxiv.org/abs/1904.04442}
	
	\bibitem{jing2017modeling}
	Yuxin Jing, Hanpin Wang, Yu~Huang, Lei Zhang, Jiang Xu, and Yongzhi Cao.:
	\newblock A modeling language to describe massive data storage management in
	cyber-physical systems.
	\newblock {\em Journal of Parallel and Distributed Computing}, 103:113--120,
	(2017).
	
	\bibitem{jung2015iris}
	Ralf Jung, David Swasey, Filip Sieczkowski, Kasper Svendsen, Aaron Turon, Lars
	Birkedal, and Derek Dreyer.:
	\newblock Iris: Monoids and invariants as an orthogonal basis for concurrent
	reasoning.
	\newblock In: {\em ACM SIGPLAN Notices}, volume 50, pages 637--650. ACM, (2015).
	
	\bibitem{krogh2017relational}
	Morten Krogh-Jespersen, Kasper Svendsen, and Lars Birkedal.:
	\newblock A relational model of types-and-effects in higher-order concurrent
	separation logic.
	\newblock In: {\em ACM SIGPLAN Notices}, volume 52, pages 218--231. ACM, (2017).
	
	\bibitem{le2017decidable}
	Quang~Loc Le, Makoto Tatsuta, Jun Sun, and Wei-Ngan Chin.:
	\newblock A decidable fragment in separation logic with inductive predicates
	and arithmetic.
	\newblock In: {\em International Conference on Computer Aided Verification},
	pages 495--517. Springer, (2017).
	
	\bibitem{lee2014proof}
	Wonyeol Lee and Sungwoo Park.:
	\newblock A proof system for separation logic with magic wand.
	\newblock In: {\em ACM SIGPLAN Notices}, volume 49, pages 477--490. ACM, (2014).
	
	\bibitem{ntzik2015reasoning}
	Gian Ntzik and Philippa Gardner.:
	\newblock Reasoning about the posix file system: Local update and global
	pathnames.
	\newblock In: {\em ACM SIGPLAN Notices}, volume~50, pages 201--220. ACM, (2015).
	
	\bibitem{o2001local}
	Peter O'Hearn, John Reynolds, and Hongseok Yang.:
	\newblock Local reasoning about programs that alter data structures.
	\newblock In: {\em International Workshop on Computer Science Logic}, pages
	1--19. Springer, (2001).
	
	\bibitem{pereverzeva2013formal}
	Inna Pereverzeva, Linas Laibinis, Elena Troubitsyna, Markus Holmberg, and Mikko
	P{\"o}ri.:
	\newblock Formal modelling of resilient data storage in cloud.
	\newblock In: {\em International Conference on Formal Engineering Methods},
	pages 363--379. Springer, (2013).
	
	\bibitem{pym2018separation}
	David Pym, Jonathan M Spring, and Peter O'Hearn.:
	\newblock Why separation logic works.
	\newblock {\em Philosophy \& Technology}, pages 1--34, (2018).
	
	\bibitem{reynolds2002separation}
	John C Reynolds.:
	\newblock Separation logic: A logic for shared mutable data structures.
	\newblock In: {\em Proceedings 17th Annual IEEE Symposium on Logic in Computer
		Science}, pages 55--74. IEEE, (2002).
	
	\bibitem{shvachko2010hadoop}
	Konstantin Shvachko, Hairong Kuang, Sanjay Radia, and Robert Chansler.:
	\newblock The hadoop distributed file system.
	\newblock In: {\em 2010 IEEE 26th symposium on mass storage systems and
		technologies}, pages 1--10. IEEE, (2010).
	
	\bibitem{svendsen2014impredicative}
	Kasper Svendsen and Lars Birkedal.:
	\newblock Impredicative concurrent abstract predicates.
	\newblock In: {\em European Symposium on Programming Languages and Systems},
	pages 149--168. Springer, (2014).
	
	\bibitem{ta2016automated}
	Quang-Trung Ta, Ton Chanh Le, Siau-Cheng Khoo, and Wei-Ngan Chin.:
	\newblock Automated mutual explicit induction proof in separation logic.
	\newblock In: {\em International Symposium on Formal Methods}, pages 659--676.
	Springer, (2016).
	
	\bibitem{wang2018reasoning}
	Hanpin Wang, Zhao Jin, Lei Zhang, Yuxin Jing, and Yongzhi Cao.:
	\newblock Reasoning about cloud storage systems.
	\newblock In: {\em 2018 IEEE Third International Conference on Data Science in
		Cyberspace}, pages 107--114. IEEE, (2018).
	
	\bibitem{winskel1993formal}
	Glynn Winskel.:
	\newblock {\em The formal semantics of programming languages: an introduction}.
	\newblock MIT press, (1993).
	
	
	\bibitem{yang2001local}
	Yang H.:
	\newblock {\em Local Reasoning for Stateful Programs}.
	\newblock PhD thesis, University of Illinois at Urbana-Champaign, (2001).
	
	
	\bibitem{yang2002semantic}
	Hongseok Yang and Peter O'Hearn.:
	\newblock A semantic basis for local reasoning.
	\newblock In: {\em International Conference on Foundations of Software Science
		and Computation Structures}, pages 402--416. Springer, (2002).
	
	\bibitem{hoare1969axiomatic}
	Hoare, Charles Antony Richard.:
	\newblock An axiomatic basis for computer programming.
	\newblock {\em Communications of the ACM}, pages 576--580, (1969).
	
	\bibitem{dean2008mapreduce}
	Dean, Jeffrey and Ghemawat, Sanjay.:
	\newblock MapReduce: simplified data processing on large clusters.
	\newblock {\em Communications of the ACM}, pages 107--113, (2008).
	
	\bibitem{S3}
	CNN Business.: 
	\newblock Amazon broke the internet with a typo [EB/OL], (2017).
	\newblock {https://money.cnn.com/2017/03/02/technology/amazon-s3-outage-human-error/index.html}
	
	\bibitem{hesselink2012formalizing}
	Hesselink, Wim H and Lali, Muhammad Ikram.:
	\newblock Formalizing a hierarchical file system.
	\newblock {\em Formal Aspects of Computing}, pages 27--44, (2012).
	
\end{thebibliography}
\end{document}